\newcommand\@TyAlph[1]{%
\ifcase #1\or \tau\or \sigma\or \rho\else \@ctrerr \fi%
}
\newcommand\ty[1][1]{{\@TyAlph{#1}}}
\newcommand\tvar[1][1]{{\@TyVarAlph{#1}}}
\newcommand\@TyVarAlph[1]{%
\ifcase #1\or \alpha\or \beta\or \gamma\else \@ctrerr \fi%
}
\newcommand\var[1][1]{{\@VarAlph{#1}}}
\newcommand\@VarAlph[1]{%
\ifcase #1\or x\or y\or z\or u\or v\or w\else \@ctrerr \fi%
}
\newcommand\trm[1][1]{{\@TermAlph{#1}}}
\newcommand\@TermAlph[1]{%
\ifcase #1\or t\or s\or r\else \@ctrerr \fi%
}
\newcommand\val[1][1]{%
\ifcase #1\or v\or w\or u\else \@ctrerr \fi%
}
\newcommand\op[1][1]{%
\ifcase #1\or \mathsf{op}\or \mathsf{op}'\or \mathsf{op}''\else \@ctrerr \fi%
}
\newcommand\Op{\mathsf{Op}}
\newcommand\cnst{\underline{c}}
\newcommand\sigmoid{\varsigma}
\newcommand\bp[1]{\boldsymbol{(}#1\boldsymbol{)}}
\newcommand\tPair[2]{\langle #1, #2\rangle}
\newcommand\tTriple[3]{\langle #1, #2, #3\rangle}
\newcommand\tTuple[1]{\langle #1\rangle}
\newcommand\tInj[3][\,]{#2.#3#1}
\newcommand\Cns{\ell}
\newcommand\Inj[2][\,]{\mathsf{#2}#1}
\newcommand\tNothingSym{\mathsf{Nothing}}
\newcommand\tJustSym{\mathsf{Just}}
\newcommand\tNil{[\,]}
\newcommand\tCons[2]{#1::#2}
\newcommand\fun[1]{\lambda #1.}
\newcommand\letin[3]{\mathbf{let}\,#1=\,#2\,\mathbf{in}\,#3}
\newcommand\pMatch[5][\,]{\mathbf{case}\,#2\,\mathbf{of}#1\tPair{#3}{#4}\To#5}
\newcommand\tMatch[4][\,]{\mathbf{case}\,#2\,\mathbf{of}#1\tTuple{#3}\To#4}
\newcommand\vMatch[3][\,]{\mathbf{case}\,#2\,\mathbf{of}#1\{#3\}}
\newcommand\lFold[5]{\mathbf{fold}\, (#1,#2).#3\,\mathbf{over}\,#4\,\mathbf{from}\,#5}
\newcommand\ctx{\Gamma}
\newcommand\tinf{\vdash}
\newcommand\Ginf[3][]{\ctx #1\tinf #2 : #3}
\newcommand\subst[2]{#1{}[#2]}
\newcommand\sfor[2]{^{#2}\!/\!_{#1}}
\newcommand\reals{\mathbf{real}}
\newcommand\Unit{\bp{\,}}
\newcommand\Variant[1]{\{ #1 \}}
\newcommand\t*{\boldsymbol{\mathop{*}}}
\newcommand\MaybeSym{\mathbf{maybe}}
\newcommand\ListSym{\mathbf{list}}
\newcommand\Maybe[1]{\MaybeSym(#1)}
\newcommand\List[1]{\ListSym(#1)}
\newcommand\To{\to}
\newcommand\bProd[2]{\bp{#1 \t* #2}}
\newcommand\tProd[3]{\bp{#1 \t* #2 \t* #3}}
\newcommand\Dsynsymbol[1][]{\scalebox{0.8}{$\overrightarrow{\mathcal{D}}$}_{#1}}%{{\stackrel{\to}{\mathcal{D}}}_{#1}}
\newcommand\Dsyn[2][]{\Dsynsymbol[#1](#2)}
\newcommand\Dsynrevsymbol[1][]{\scalebox{0.8}{$\overleftarrow{\mathcal{D}}$}_{#1}}%{{\stackrel{\leftarrow}{\mathcal{D}}}_{#1}}
\newcommand\Dsynrev[2][]{\Dsynrevsymbol[#1](#2)}
\newcommand\Syn{\mathbf{Syn}}
\newcommand\tFromMaybe[1]{\mathrm{fromMaybe}}%_{#1}}
\newcommand\tFromMayben[2]{\mathrm{fromMaybe}^{#2}}
\newcommand\tMap[2]{\mathrm{map}}
\newcommand{\plots}[1]{\mathcal{P}_{#1}}
\newcommand\freeeq[1]{\stackrel{\# #1}{=}}
\definecolor{shade}{RGB}{223,223,223}
\definecolor{unshade}{RGB}{255,255,255}
\newtcbox{\shadebox}{on line,arc=1pt, outer arc=2pt,%
  colback=shade,colframe=shade,boxsep=0pt,%
  left=1pt,right=1pt,top=2pt,bottom=2pt,%
  boxrule=0pt,bottomrule=1pt,toprule=1pt}
\newtcbox{\unshadebox}{on line,arc=1pt, outer arc=2pt,%
  colback=unshade,colframe=shade,boxsep=0pt,%
  left=1pt,right=1pt,top=2pt,bottom=2pt,%
  boxrule=0pt,bottomrule=1pt,toprule=1pt}
\newcommand\syncat[1]{\mspace{-25mu}\synname{#1}}
\newcommand\synname[1]{\qquad\text{#1}}
\newenvironment{syntax}[1][]{%
\(
  \rowcolors{100}{white}{white}%
  \begin{array}[t]{#1l@{\quad\!\!}*3{l@{}}@{\,}l}
}{
\end{array}
\)%
}
\newcommand\gdefinedby{::=}
\newcommand\gor{\mathrel{\lvert}}
\newcommand\vor{\mathrel{\big\lvert}}
\newcommand{\dif}{\mathop{}\!\mathrm{d}}
\newcommand{\Diff}{\mathbf{Diff}}
\newcommand{\CartSp}{\mathbf{CartSp}}
\newcommand{\Man}{\mathbf{Man}}
\newcommand{\Set}{\mathbf{Set}}
\newcommand{\sem}[1]{\llbracket #1\rrbracket}
\newcommand{\semgl}[1]{\llparenthesis #1\rrparenthesis}
\newcommand{\RR}{\mathbb{R}}
\newcommand{\NN}{\mathbb{N}}
\newcommand\catC{\mathcal{C}}
\newcommand\freeF{F}
\newcommand\Dsemsymbol[1][]{\mathcal{T}^{#1}}%{\stackrel{\to}{\mathcal{T}}}
\newcommand\Dsem[2][]{\Dsemsymbol[#1](#2)}
\newcommand\evRsymbol[1][]{\mathrm{evR}^{#1}}
\newcommandtwoopt\evR[3][][]{\evRsymbol[#2]_{#1}(#3)}
\newcommand\lamRsymbol[1][]{\mathrm{lamR}^{#1}}
\newcommandtwoopt\lamR[3][][]{\lamRsymbol[#2]_{#1}(#3)}
\newcommand{\Gl}[1][]{\mathbf{Gl}_{#1}}
\newcommand{\sPair}[2]{( #1, #2 )}
\newcommand{\sTriple}[3]{(#1, #2, #3)}
\newcommand\id[1][{}]{{\rm id}_{#1}}
\newcommand\projf{\mathrm{proj}}
\newcommand\xto\xrightarrow
\newcommand\DtoT[2][]{\phi_{#2#1}^{\Dsynsymbol\Dsemsymbol}}
\newcommand\innerprod[1]{\cdot_{#1}}
\newcommand\y{\mathbf{y}}
\newcommand\seq[2][]{\left(#2\right)_{#1}}
\newcommand\coseq[2][]{\left[#2\right]_{#1}}
\newcommand\cover{\mathcal{U}}
\renewcommand\lim{\mathrm{lim}}
\newcommand{\defeq}{\stackrel {\mathrm{def}}=}
\newenvironment{tightitemize}{\begin{itemize}[leftmargin=*,noitemsep,topsep=0pt]}{\end{itemize}\ignorespacesafterend}
\begin{document}
\title{Correctness of Automatic Differentiation via Diffeologies and Categorical Gluing
%Semantic Correctness of Automatic Differentiation
%\thanks{Supported by organization x.}
}
\titlerunning{Correctness of AD via Diffeologies and Categorical Gluing}
% If the paper title is too long for the running head, you can set
% an abbreviated paper title here
%
\author{Mathieu Huot \Letter\inst{1}\inst{*}
%\orcidID{}
\and
Sam Staton\inst{1}\inst{*}
%\orcidID{}
\and
%\orcidID{}
Matthijs V\'ak\'ar\inst{2}\inst{*}
}
\authorrunning{M. Huot, S. Staton, M. V\'ak\'ar}
% First names are abbreviated in the running head.
% If there are more than two authors, 'et al.' is used.
%
\institute{University of Oxford, UK \and
Utrecht University, The Netherlands \\
\email{${}^*$\textnormal{Equal contribution}\qquad{mathieu.huot@stx.ox.ac.uk}}
%\url{http://www.springer.com/gp/computer-science/lncs}
}
%\email{\{abc,lncs\}@uni-heidelberg.de}}
%
\maketitle              % typeset the header of the contribution

\begin{abstract}
We present semantic correctness proofs of Automatic Differentiation (AD).
We consider a forward-mode AD method on a higher order language with algebraic data types, and we characterise it as the unique structure preserving macro given a choice of derivatives for basic operations.
We describe a rich semantics for differentiable programming, based on diffeological spaces.
We show that it interprets our language, and we phrase what it means for the AD method to be correct with respect to this semantics.
We show that our characterisation of AD gives rise to an elegant semantic proof of its correctness based on a gluing construction on diffeological spaces.
We explain how this is, in essence, a logical relations argument.
Finally, we sketch how the analysis extends to other AD methods by considering a continuation-based method.
%\keywords{Automatic differentiation  \and Semantics \and Differential programming \and Correctness \and Logical relations.}

\end{abstract}
%
%
%
% Outline for revised introduction

\section{Introduction}
Automatic differentiation (AD), loosely speaking, is the process of taking a program describing a function, and building the derivative of that function by applying the chain rule across the program code.
As gradients play a central role in many aspects of machine learning, so too do automatic differentiation systems such as TensorFlow~\cite{abadi2016tensorflow} or Stan~\cite{carpenter2015stan}.

\begin{wrapfigure}[9]{r}{73mm} \vspace{-15mm}
\[
	\xymatrix@C+2mm{
          *+[F]{\txt{Programs}}
          \ar[d]_-{\txt{\footnotesize denotational\\\footnotesize semantics}} \ar[rr]^{\txt{\footnotesize automatic\\\footnotesize differentiation}}
		 && *+[F]{\txt{Programs}} \ar[d]^-{\txt{\footnotesize denotational\\\footnotesize semantics}}\\
		 *+[F]{\txt{Differential\\geometry}}\ar[rr]^{\txt{\footnotesize math\\\footnotesize differentiation}}
		 && *+[F]{\txt{Differential\\geometry}}
	}
  \]\vspace{-5mm}\caption{Overview of semantics/correctness of AD.\label{fig:intro}}\end{wrapfigure}
 Differentiation has a well developed mathematical theory in terms of differential geometry. The aim of this paper is
to formalize this connection between differential geometry
and the syntactic operations of AD. 
In this way we achieve two things: (1)~a compositional, denotational understanding of differentiable programming and AD; (2)~an explanation of the correctness of AD.

This intuitive correspondence (summarized in Fig.~\ref{fig:intro}) is in fact rather complicated.
  In this paper we focus on resolving the following problem: higher order functions play a key role in programming, and yet they have no counterpart in traditional differential geometry. Moreover, we resolve this problem while retaining the compositionality of denotational semantics.

  \newpage
  
\vspace{-10pt}\subsubsection{Higher order functions and differentiation.}
A major application of higher order functions is to support disciplined code reuse.
Code reuse is particularly acute in machine learning.
For example,
a multi-layer neural network might be built of millions of near-identical neurons,
as follows. 
\newcommand{\neuron}{\mathrm{neuron}}
\newcommand{\layer}{\mathrm{layer}}
\newcommand{\compose}{\mathrm{comp}}
\[\begin{array}{ll}\begin{aligned}
  &\neuron_n:\bProd{\reals^n}{\bProd{\reals^n}{\reals}}\To\reals
  \\&\neuron_n\defeq \lambda \tTuple{x,\tTuple{w,b}}.\,\sigmoid( w\cdot x+b) 
  \\
  &\layer_n:(\bProd{\ty_1}{P}\To \ty_2)\To \bProd{\ty_1}{P^n}\To \ty_2^n
  \\
  &\layer_n\defeq\lambda f.\,\lambda\tTuple{x,\tTuple{p_1,\dots, p_n}}.\,
  %\\&\qquad\qquad\qquad
  \tTuple{f\tTuple{x,p_1},\dots,f\tTuple{x,p_n}}
  \\
  &\compose : \bProd{(\bProd{\ty_1}{P}\To \ty_2)}{(\bProd{\ty_2}{Q}\To \ty_3)}\To \bProd{\ty_1}{\bProd{P}{Q}}\To \ty_3
    \\
  &\compose\defeq\lambda \tTuple{f,g}.\,\lambda\tTuple{x,(p,q)}.\,g\tTuple{f\tTuple{x,p},q}
\end{aligned}
    &\hspace{-16mm}
\raisebox{-8mm}[0pt]{\begin{tikzpicture}[xscale=0.5,yscale=0.7]
      \datavisualization [scientific axes=clean,
                    y axis=grid,%school book axes,
                    visualize as smooth line,
                    y axis={label={$\sigmoid(x)$},ticks={step=0.5}},
                    x axis={label, ticks={
                                        step = 5}} ]
data [format=function] {
      var x : interval [-9:9] samples 100;
      func y = 1/(1 + exp(-\value x));
      };
      \end{tikzpicture}}
\end{array}\]
(Here $\sigmoid(x) \defeq\frac 1 {1+e^{-x}}$ is the sigmoid function, as illustrated.)
We can use these functions to build a network as follows (see also Fig.~\ref{fig:network}):
\begin{equation}\label{eqn:network}
  \compose\tTuple{\layer_m(\neuron_k),\compose\tTuple{\layer_n(\neuron_m),\neuron_n}}
  :\bProd{\reals^k}{P}\to\reals
\end{equation}
\begin{wrapfigure}[12]{r}{40mm} \vspace{-10mm}
 \tikzset{%
  every neuron/.style={
    circle,
    draw,
    minimum size=4mm
  },
  neuron missing/.style={
    draw=none, 
    scale=1,
    text height=0.01cm,
    execute at begin node=\color{black}$\cdots$
  },
}

\begin{tikzpicture}[x=1.5cm, y=1.5cm, >=stealth, rotate=90,xscale=-0.3,yscale=0.5]

\foreach \m/\l [count=\y] in {1,2,3,missing,4}
  \node [every neuron/.try, neuron \m/.try] (input-\m) at (0,2.5-\y) {};

\foreach \m [count=\y] in {1,2,missing,3}
  \node [every neuron/.try, neuron \m/.try ] (hidden-\m) at (2,2-\y*1.25) {};

\foreach \m [count=\y] in {1,2,missing,3}
  \node [every neuron/.try, neuron \m/.try ] (hiddenb-\m) at (4,2-\y*1.25) {};

\foreach \m [count=\y] in {1}
  \node [every neuron/.try, neuron \m/.try ] (output-\m) at (6,1.5-\y) {};

\foreach \l [count=\i] in {1,2,3,k}
  \draw [<-] (input-\i) -- ++(-1,0)
    node at (input-\i) {$\l$};

\foreach \l [count=\i] in {1,2,m}
  \node at (hidden-\i) {$\l$};

\foreach \l [count=\i] in {1,2,n}
  \node at (hiddenb-\i) {$\l$};

\foreach \l [count=\i] in {1}
  \draw [->] (output-\i) -- ++(1,0)
    node [above, midway] {};%$O_\l$};

\foreach \i in {1,...,4}
  \foreach \j in {1,...,3}
    \draw [->] (input-\i) -- (hidden-\j);

\foreach \i in {1,...,3}
  \foreach \j in {1,...,3}
    \draw [->] (hidden-\i) -- (hiddenb-\j);

\foreach \i in {1,...,3}
  \foreach \j in {1}
    \draw [->] (hiddenb-\i) -- (output-\j);

%\foreach \l [count=\x from 0] in {Input, Hidden, Ouput}
%  \node [align=center, above] at (\x*2,2) {\l \\ layer};

\end{tikzpicture}

%%% Local Variables:
%%% mode: latex
%%% TeX-master: "paper"
%%% End:\vspace{-3mm}\caption{The network in~\eqref{eqn:network} with $k$ inputs and two hidden layers.\label{fig:network}}\end{wrapfigure}
Here $P\cong \reals^p$ with $p=(m(k{+}1){+}n(m{+}1){+}n{+}1)$. 
This program~\eqref{eqn:network} describes a smooth (infinitely differentiable) function.
The goal of automatic differentiation is to find its derivative.

If we $\beta$-reduce all the $\lambda$'s, we end up with a very long function expression just built from the sigmoid function and linear algebra. We can then find a program for calculating its derivative by applying the chain rule.
However, automatic differentiation can also be expressed without first $\beta$-reducing,
in a compositional way, by explaining how higher order functions like $(\layer)$ and $(\compose)$  propagate derivatives. This paper is a semantic analysis of this compositional approach.

The general idea of denotational semantics is to interpret types as spaces and programs as functions between the spaces. In this paper, we propose to use
diffeological spaces and smooth functions~\cite{souriau1980groupes,iglesias2013diffeology} to this end.
These satisfy the following three desiderata:
\begin{itemize}
\item $\RR$ is a space, and the smooth functions $\RR\to\RR$ are exactly the functions that are infinitely differentiable;
\item The set of smooth functions $X\to Y$ between spaces again forms a space,
  so we can interpret function types.
\item The disjoint union of a sequence of spaces again forms a space, so we can interpret variant types and inductive types. 
\end{itemize}
We emphasise that the most standard formulation of differential geometry, using manifolds, does not support spaces of functions. Diffeological spaces seem to us the simplest notion of space that satisfies these conditions, but there are other candidates~\cite{baez2011convenient,smootheology}.
A diffeological space is in particular a set $X$ equipped with a chosen set of curves
$C_X\subseteq X^\RR$
and a smooth map $f:X\to Y$ must be such that if $\gamma\in C_X$ then
$\gamma;f\in C_Y$. 
This is remiscent of the method of logical relations.

\vspace{-10pt}\subsubsection{From smoothness to automatic derivatives at higher types.}
Our denotational semantics in diffeological spaces
guarantees that all definable functions are smooth.
But we need more than just to know that a definable function happens to have a mathematical derivative: we need to be able to find that derivative.

In this paper we focus on a simple, forward mode automatic differentiation method, which is a macro translation on syntax (called~$\Dsynsymbol$ in \S\ref{sec:simple-language}). 
We are able to show that it is correct, using our denotational semantics.

Here there is one subtle point that is central to our development.
Although differential geometry provides established derivatives for first order functions
(such as $\neuron$ above),
there is no canonical notion of derivative for higher order functions (such as $\layer$ and $\compose$) 
in the theory of diffeological spaces (e.g.~\cite{christensen2014tangent}). 
We propose a new way to resolve this, by interpreting types as triples $(X,X',S)$ where, intuitively, $X$ is a space of inhabitants of the type, $X'$ is a space serving as a chosen bundle of tangents over $X$, and $S\subseteq X^\RR\times X'^\RR$ is a binary relation between curves, informally relating curves in $X$ with their tangent curves in $X'$.
This new model gives a denotational semantics for automatic differentiation. 

In \S\ref{sec:semantics} we boil this new approach down to a straightforward and elementary logical relations argument for the correctness of automatic differentiation. The approach is explained in detail in \S\ref{sec:correctness}.

\vspace{-10pt}\subsubsection{Related work and context.}
AD has a long history and has many implementations.
AD was perhaps first phrased in a functional setting in~\cite{pearlmutter2008reverse}, and there are now a number of teams working on AD in the functional setting 
(e.g.~\cite{wang2018demystifying,shaikhha2019efficient,elliott2018simple}), some providing efficient implementations. 
Although that work does not involve formal semantics, it is inspired by intuitions from differential geometry and category theory. 

This paper adds to a very recent body of work on verified automatic differentiation. Much of this is concurrent with and independent from the work in this article. 
In the first order setting,
there are recent accounts based on denotational semantics in manifolds~\cite{fong2019backprop} and based on synthetic differential geometry~\cite{gallagher-sdg},
as well as work making a categorical abstraction~\cite{rev-deriv-cat2020} and
work connecting operational semantics 
with denotational semantics \cite{abadi-plotkin2020,plotkin-invited-talk}.
Recently there has also been significant progress at higher types. The work of Brunel et al.~gives formal correctness proofs for reverse-mode derivatives on computation graphs~\cite{brunel2019backpropagation}.
The work of Barthe et al.~\cite{bcdg-open-logical-relations} provides a general discussion of some new syntactic logical relations arguments including one very similar to our syntactic proof of Theorem~\ref{thm:fwd-cor-basic}. 
We understand that the authors of~\cite{gallagher-sdg} are working on higher types.

The differential $\lambda$-calculus \cite{ehrhard2003differential} is related to AD, and explicit connections are made in \cite{mak-ong2020,Manzyuk2012}. One difference is that the differential $\lambda$-calculus allows addition of terms at all types, and hence vector space models are suitable, but this appears peculiar with the variant and inductive types that we consider here. 

Finally we emphasise that we have chosen the neural network~(\ref{eqn:network})
as our running example mainly for its simplicity. There are many other examples of AD outside the neural networks literature:
AD is useful whenever derivatives need to be calculated on high dimensional spaces. This includes optimization problems more generally, where the derivative is passed to a
gradient descent method (e.g.~\cite{robbins1951stochastic,kiefer1952stochastic,qian1999momentum,kingma2014adam,duchi2011adaptive,liu1989limited}).
% If higher derivatives are available, Newton's method could also be used.
% Not convinced the above sentence is worth saying
Other applications of AD are in advanced \emph{integration} methods, since derivatives play a role in 
Hamiltonian Monte Carlo~\cite{neal2011mcmc,hoffman2014no} and variational inference~\cite{kucukelbir2017automatic}. 

\vspace{-10pt}\subsubsection{Summary of contributions.}
We have provided a semantic analysis of automatic differentiation. 
Our syntactic starting point is a well-known forward-mode AD macro on a typed higher order language (e.g.~\cite{shaikhha2019efficient,wang2018demystifying}). We recall this in \S\ref{sec:simple-language} 
for function types, and in \S\ref{sec:extended-language} we extend it to inductive types and variants. 
 The main contributions of this paper are as follows. 
\begin{itemize}
\item We give a denotational semantics for the language in diffeological spaces, showing that every definable expression is smooth (\S\ref{sec:semantics}).
\item We show correctness of the AD macro by a logical relations argument (Th.~\ref{thm:fwd-cor-basic}).
\item We give a categorical analysis of this correctness argument with two parts: canonicity of the macro in terms of syntactic categories, and a new notion of glued space that abstracts the logical relation (\S\ref{sec:correctness}).
\item We then use this analysis to state and prove a correctness argument at all first order types (Th.~\ref{thm:fwd-cor-full}). 
\item We show that our method is not specific to one particular AD macro, by also considering a continuation-based AD method~(\S\ref{sec:rev-mode-short}). 
\end{itemize}

% Page 3.
% Survey of related work on AD in the PL community: Gallagher et al SDG, Pagani on normalization, slightly less related work e.g. Pearlmutter, Lantern. 
% List of contributions. Can we explain fwd v rev mode here?

% In the conclusion: why what we are doing is a bit naive, and summary of the contributions. 

%%% Local Variables:
%%% mode: latex
%%% TeX-master: "paper"
%%% End:

\section{A simple forward-mode AD translation}\label{sec:simple-language}
\subsubsection{Rudiments of differentiation and dual numbers.}
Recall that the derivative of a function $f:\RR\to \RR$, if it exists, is a function
$\nabla f:\RR\to \RR$ such that $\nabla f(x_0)=\frac {\dif f(x)}{\dif x}(x_0)$ is the gradient of $f$ at $x_0$. 

To find $\nabla f$ in a compositional way, two generalizations are reasonable:
\begin{tightitemize}
\item We need both $f$ and $\nabla f$ when calculating $\nabla (f;g)$
of a composition $f;g$, using the chain rule, so we are really interested in the pair $(f,\nabla f):\RR\to \RR\times \RR$;
\item In building $f$ we will need to consider functions of multiple arguments, such as $+:\RR^2\to \RR$, and these functions should propagate derivatives.
\end{tightitemize}
Thus we are more generally interested in transforming a function $g:\RR^n\to \RR$ into a function
$h:(\RR\times \RR)^n\to \RR\times \RR$ in such a way that for any
$f_1\dots f_n:\RR\to\RR$, 
\begin{equation}
  \label{eqn:dualnumber}
  (f_1,\nabla f_1,\dots, f_n,\nabla f_n);h
  =
  ((f_1,\dots, f_n);g,\nabla ((f_1, \dots, f_n);g))\text.
\end{equation}

An intuition for $h$ is often given in terms of dual numbers.
The transformed function operates on pairs of numbers, $(x,x')$, and it is common
to think of such a pair as $x+x'\epsilon$ for an `infinitesimal' $\epsilon$.
But while this is a helpful intuition, the formalization of infinitesimals can be intricate, 
and the development in this paper is focussed on the elementary formulation in~\eqref{eqn:dualnumber}. 

The reader may also notice that $h$ encodes all the partial derivatives of
$g$. For example, 
if $g \colon \RR^2\to \RR$, then with $f_1(x)\defeq x$ and $f_2(x)\defeq x_2$, by applying \eqref{eqn:dualnumber} to $x_1$ we obtain
$h(x_1,1,x_2,0)=(g(x_1,x_2), \frac {\partial g(x,x_2)}{\partial x}(x_1))$
and similarly 
$h(x_1,0,x_2,1)=(g(x_1,x_2), \frac {\partial g(x_1,x)}{\partial x}(x_2))$.
And conversely, if $g$ is differentiable in each argument, then
a unique $h$ satisfying \eqref{eqn:dualnumber} can be found by taking linear
combinations of partial derivatives:
\[\textstyle h(x_1,x_1',x_2,x_2')=(g(x_1,x_2),x_1' \cdot\frac {\partial g(x,x_2)}{\partial x}(x_1)+x_2'\cdot \frac {\partial g(x_1,x)}{\partial x}(x_2))\text.\]

In summary, the idea of differentiation with dual numbers is 
to transform a differentiable function
$g:\RR^n\to \RR$ to a function $h:\RR^{2n}\to \RR^2$ which captures~$g$ and all its partial derivatives. We packaged this up in~\eqref{eqn:dualnumber} as a sort-of invariant which is useful for building derivatives of compound functions $\RR\to\RR$ in a compositional way.
The idea of forward mode automatic differentiation is to perform this transformation at the source code level. 

\vspace{-10pt}
\subsubsection{A simple language of smooth functions.}
We consider a standard higher order typed language with a first order type $\reals$ of real numbers. The types $(\ty,\ty[2])$ and terms $(\trm,\trm[2])$ are as follows.

\noindent\begin{syntax}
  \ty, \ty[2], \ty[3] & \gdefinedby & & \syncat{types}                          \\
  &\gor& \reals                      & \synname{real numbers}\\
\end{syntax}%
~
\begin{syntax}
  &\gor\quad\,& \tProd{\ty_1}{\dots}{\ty_n} & \synname{finite product} \\
&\gor& \ty \To \ty[2]              & \synname{function}      \\
\end{syntax}

\noindent\begin{syntax}
    \trm, \trm[2], \trm[3] & \gdefinedby & & \syncat{terms}                          \\
    &    & \var                          & \synname{variable}                        \\
    &\gor& \cnst
    \ \gor\ \trm+\trm[2]
    \ \gor\ \trm*\trm[2]
%    \ \gor\ \cos(\trm)
    \ \gor\ \sigmoid(\trm)
% \ \gor\ -\trm
    & \synname{operations/constants}                      \\
    &\gor& \tTriple{\trm_1}{\dots}{\trm_n}%\tUnit
    %\ \gor\ \tPair{\trm}{\trm[2]}
%    &\gor& \uMatch{\trm}{\trm[2]} & \synname{pattern matching: unary products}\\
    \ \gor  \tMatch{\trm}{\var_1,\dots, \var_n}{\trm[2]}\hspace{-10pt} \;& \synname{tuples/pattern matching}\\
    &\gor& \fun \var    \trm
    \ \gor  \trm\, \trm[2]               & \synname{function abstraction/app.}\\

\end{syntax}

\noindent The typing rules are in Figure~\ref{fig:types1}. We have included a minimal set of operations for the sake of illustration, but it is not difficult to add further operations. We add some simple syntactic sugar $t-u\defeq
t+\underline{(-1)}* u$. We intend $\sigmoid$ to stand for the sigmoid function,
$\sigmoid(x)\defeq\frac 1 {1+e^{-x}}$.
We further include syntactic sugar $\letin{\var}{\trm}{\trm[2]}$ for $(\fun{\var}{\trm[2]})\,\trm$
and $\fun{\tTuple{\var_1,\ldots,\var_n}}{\trm}$ for $\fun{\var}{\tMatch{\var}{\var_1,\ldots,\var_n}{\trm}}$.

\begin{figure}[b]
  \framebox{\scalebox{0.82}{\begin{minipage}{1.2\linewidth}\noindent\input{type-system}\end{minipage}}}
  \caption{Typing rules for the simple language.\label{fig:types1}}
  \end{figure}

\subsubsection{Syntactic automatic differentiation: a functorial macro.}
The aim of forward mode AD is to find the dual numbers representation of a function by syntactic manipulations.
For our simple language, we implement this as the following inductively defined macro $\Dsynsymbol$ on both types and terms
(see also~\cite{wang2018demystifying,shaikhha2019efficient}):

\noindent\[
\begin{array}{ll}
\Dsyn{\reals} \defeq \bProd{\reals}{\reals} &
\Dsyn{\ty\To\ty[2]} \defeq \Dsyn{\ty}\To\Dsyn{\ty[2]} \\
\Dsyn{\tProd{\ty_1}{\cdots}{\ty_n}} \defeq \tProd{\Dsyn{\ty_1}}{\cdots}{\Dsyn{\ty_n}}\hspace{40pt}\, &\vspace{-3pt}
\end{array}    
\]
\noindent\[
\begin{array}{l}
\Dsyn{\var} \defeq \var\hspace{80pt}\Dsyn{\cnst} \defeq \tPair{\cnst}{0}\\
\Dsyn{\trm+\trm[2]} \defeq \pMatch{\Dsyn{\trm}}{\var}{\var'}
                       {\pMatch{\Dsyn{\trm[2]}}{\var[2]}{\var[2]'}
                       {\tPair{\var + \var[2]}{\var' + \var[2]'}}}\\
\Dsyn{\trm*\trm[2]} \defeq  \pMatch{\Dsyn{\trm}}{\var}{\var'}
                        {\pMatch{\Dsyn{\trm[2]}}{\var[2]}{\var[2]'}
                        {\tPair{\var * \var[2]}{\var * \var[2]' + \var' * \var[2]}}}\\
%\Dsyn{\cos(\trm) } \defeq \pMatch{\Dsyn{\trm}}{\var}{\var'}{\tPair{\cos(\var)}{-\var'*\sin(\var)}}\\
%\Dsyn{\sin(\trm)} \defeq \pMatch{\Dsyn{\trm}}{\var}{\var'}{\tPair{\sin(\var)}{\var'*\cos(\var)}}\\
\Dsyn{\sigmoid(\trm)} \defeq \pMatch{\Dsyn{\trm}}{\var}{\var'}{
\letin{\var[2]}{\sigmoid(\var)}{
\tPair{\var[2]}{\var'*\var[2]*(1-\var[2])}
}    
}\\
%\Dsyn{-\trm} \defeq \pMatch{\Dsyn{\trm}}{\var}{\var'}{\tPair{-\var}{-\var'}} \\
\Dsyn{\fun \var    \trm} \defeq \fun\var{\Dsyn{\trm}}\hspace{12pt}
\Dsyn{\trm\, \trm[2] } \defeq 
\Dsyn{\trm}\,\Dsyn{\trm[2]}\hspace{12pt}
\Dsyn{\tTriple{\trm_1}{\dots}{\trm_n}} \defeq \tTriple{\Dsyn{\trm_1}}{\dots}{\Dsyn{\trm_n}} \\
\Dsyn{{\tMatch{\trm}{\var_1,\dots,\var_n}{\trm[2]}}} \defeq
\tMatch{\Dsyn\trm}{\var_1,\dots,\var_n }{\Dsyn{\trm[2]}} \\
\end{array}
\]

\vspace{-5pt}
We extend $\Dsynsymbol$ to contexts: $\Dsyn{\{\var_1{:}\ty_1,{.}{.}{.},\var_n{:}\ty_n\}}\defeq
\{\var_1{:}\Dsyn{\ty_1},{.}{.}{.},\var_n{:}\Dsyn{\ty_n}\}$.
This turns $\Dsynsymbol$ into a well-typed, functorial macro in the following sense.
\begin{lemma}[Functorial macro]
	If $\ctx\tinf \trm:\ty$ then $\Dsyn{\ctx}\tinf \Dsyn{\trm}:\Dsyn{\ty}$.\\
	If $\ctx,\var:\ty[2]\tinf \trm:\ty$ and
	$\ctx\tinf\trm[2]:\ty[2]$ then
	$\Dsyn{\ctx}\tinf \Dsyn{\subst{\trm[1]}{\sfor{\var}{\trm[2]}}}=\subst{\Dsyn{\trm}}{\sfor{\var}{\Dsyn{\trm[2]}}
	}$.
\end{lemma}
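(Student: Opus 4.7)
Both claims are proved simultaneously by induction on the typing derivation of $\trm$. For the type-preservation statement, I would check each rule of Figure~\ref{fig:types1} against the clauses defining $\Dsynsymbol$. The variable case is immediate because $(\var{:}\ty)\in\ctx$ implies $(\var{:}\Dsyn\ty)\in\Dsyn\ctx$. For constants $\cnst$, the pair $\tPair{\cnst}{0}$ has type $\bProd{\reals}{\reals}=\Dsyn\reals$. For $\trm+\trm[2]$, $\trm*\trm[2]$ and $\sigmoid(\trm)$, the induction hypothesis gives $\Dsyn\ctx\tinf\Dsyn\trm:\bProd\reals\reals$ and similarly for $\trm[2]$; then a short derivation using the pattern-matching and pairing rules shows that the elaborated right-hand sides have type $\bProd\reals\reals=\Dsyn\reals$. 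The product, projection, abstraction and application cases are structurally trivial since $\Dsynsymbol$ commutes with each of these constructors on types.

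For the substitution statement, I would proceed by induction on $\trm$ with $\trm[2]$ fixed, silently $\alpha$-renaming to keep the auxiliary bound variables $\var,\var',\var[2],\var[2]'$ introduced by the macro fresh for $\var$ and for $\Dsyn{\trm[2]}$. The variable case splits into two: if the variable is $\var$, then $\Dsyn\var=\var$ and $\subst{\var}{\sfor{\var}{\trm[2]}}=\trm[2]$, so both sides equal $\Dsyn{\trm[2]}$; otherwise both sides equal the same variable. For each compound construct, the macro definition is of the shape $\Dsyn{C(\trm_1,\dots,\trm_k)}=C'(\Dsyn{\trm_1},\dots,\Dsyn{\trm_k})$ where $C'$ is some context not mentioning $\var$ (after renaming); applying the induction hypothesis to each $\trm_i$ and then pushing the substitution through $C'$ yields the result. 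The cases for $\lambda$-abstraction, pattern matching and the arithmetic operations all follow this pattern mechanically.

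The only mildly delicate point is capture-avoidance in the clauses where $\Dsynsymbol$ introduces fresh binders on the right-hand side, namely $\Dsyn{\trm+\trm[2]}$, $\Dsyn{\trm*\trm[2]}$ and $\Dsyn{\sigmoid(\trm)}$. Here one must ensure the bound names do not clash with either $\var$ or any free variable of $\Dsyn{\trm[2]}$; this is handled, as usual, by working up to $\alpha$-equivalence. Apart from this bookkeeping, the induction is entirely routine because $\Dsynsymbol$ is defined compositionally, one clause per syntactic constructor, with no side conditions involving free variables.
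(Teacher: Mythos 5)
Your proposal is correct and matches the intended argument: the paper states this lemma without proof, treating it as the routine structural induction you spell out, with type preservation following clause-by-clause and the substitution identity following because $\Dsynsymbol$ is defined compositionally (with the only care point being $\alpha$-renaming of the auxiliary binders in the $+$, $*$, $\sigmoid$ clauses, which you correctly flag).
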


\vspace{-5pt}
\begin{example}[Inner products]\label{ex:innerprod}
Let us write $\ty^n$ for the $n$-fold product $\tProd{\ty}{\dots}{\ty}$.
Then, given $\Ginf{\trm,\trm[2]}{\reals^n}$ we can define their inner product\\
$
\begin{array}{ll}
\Gamma\vdash\trm\innerprod{n}\trm[2]\defeq\;&\tMatch{\trm}{\var[3]_1,\ldots,\var[3]_n}{}\\
&\tMatch{\trm[2]}{\var[2]_1,\ldots,\var[2]_n}{\var[3]_1 * \var[2]_1 + \dots + \var[3]_n * \var[2]_n}
:\reals	
\end{array}
$\\
To illustrate the calculation of $\Dsynsymbol$, let us expand (and $\beta$-reduce) $\Dsyn{\trm\innerprod{2}\trm[2]}$:\\
$
\pMatch{\Dsyn{\trm}}{\var[3]_1}{\var[3]_2}{}\pMatch{\Dsyn{\trm[2]}}{\var[2]_1}{\var[2]_2}{}\pMatch{\var[3]_1}{\var[3]_{1,1}}{\var[3]_{1,2}}{}\\
\pMatch{\var[2]_1}{\var[2]_{1,1}}{\var[2]_{1,2}}{} 
\pMatch{\var[3]_2}{\var[3]_{2,1}}{\var[3]_{2,2}}{}
\pMatch{\var[2]_2}{\var[2]_{2,1}}{\var[2]_{2,2}}{}\\
\tPair{\var[3]_{1,1}*\var[2]_{1,1}+\var[3]_{2,1}*\var[2]_{2,1}\ }{\ \var[3]_{1,1}*\var[2]_{1,2}+\var[3]_{1,2}*\var[2]_{1,1}+\var[3]_{2,1}*\var[2]_{2,2}+\var[3]_{2,2}*\var[2]_{2,1}}
$
\end{example}

\begin{example}[Neural networks]
In our introduction~\eqref{eqn:network}, we provided a program in our language 
to build a neural network out of expressions $\neuron,\layer,\compose$;
this program makes use of the inner product of Ex.~\ref{ex:innerprod}.
% Define a type $\NeurN{n}{m}\defeq\reals^{n}\To\reals^m\To\reals$ of neural
% nets with parameters of type $\reals^m$, taking inputs of type $\reals^n$.
% Consider some simple building blocks for sigmoid
% layers and stacking of layers\vspace{5pt}\\
% $
% \begin{array}{ll}
% \vdash \tSigLayer{n} &\defeq  \fun{\var}{\fun{\tTuple{w_1,\ldots, w_n,b}}{\sigmoid(\tTuple{w_1,\ldots,w_n}\innerprod{n} \var + b)}} : \NeurN{n}{n+1}\\
% \vdash \tStack{n}{m}{k}{l} &\defeq  \fun{f_{1,1}}{\cdots\fun{f_{1,m}}{\fun{f_2}
% {\fun{\var}{\fun{\tTuple{w_{1,1},
% \ldots
% ,w_{k,m},w'_{1},\ldots,w'_l}}{
% }}}}}\\
% &
% {{{{{{{{{{
% f_2 \, \tTriple{f_{1,1}\,\var\,\tTuple{w_{1,1},\ldots,w_{k,1}}}{\cdots}{f_{1,m}\,\var\,\tTuple{w_{1,m},\ldots,w_{k,m}}}\,\tTuple{w'_{1},\ldots,w'_l}
% }
% }
% }
% }
% }
% }
% }}}}\\
% &: \NeurN{n}{k}\To\cdots\To\NeurN{n}{k}\To\NeurN{m}{l}\To\NeurN{n}{k*m+l}.
% \end{array}$\vspace{5pt}\\
% Observe that stacking is a higher order phenomenon.
% We can use this to build, for example, a two layer neural network:\vspace{5pt}\\
% $
% \vdash\tNn{n}{m}\defeq\tStack{n}{m}{n+1}{m+1} \,\tSigLayer{n}\cdots \tSigLayer{n}\, \tSigLayer{m} : \NeurN{n}{(n+1)*m+(m+1)}.
% $\\
% At this point, nothing is stopping us from repeatedly applying $\tStack{n}{m}{k}{l}$ to easily construct 
% networks with many layers.
We can similarly calculate $\Dsynsymbol$ of such deep neural nets by mechanically applying the macro.
\end{example}

%%% Local Variables:
%%% mode: latex
%%% TeX-master: "paper"
%%% End:

\section{Semantics of differentiation}\label{sec:semantics}
Consider for a moment the first order fragment of the language in \S~\ref{sec:simple-language}, with only one type, $\reals$, 
and no $\lambda$'s or pairs. 
This has a simple semantics in the category of cartesian spaces and smooth maps.
Indeed, a term $\var_1\dots\var_n:\reals \vdash \trm:\reals$ has a natural reading
as a function $\sem{\trm}:\RR^n\to\RR$
by interpreting our operation symbols by the 
well-known operations on $\RR^n\to\RR$ with the corresponding name.
In fact, the functions that are definable in this first order fragment are smooth, 
which means that they are continuous, differentiable, and their derivatives are continuous, differentiable, and so on. 
Let us write $\CartSp$ for this category of cartesian spaces ($\RR^n$ for some $n$)
and smooth functions.

The category $\CartSp$ has cartesian products, and so we can also interpret product types, tupling and pattern matching,
giving us a useful syntax
for constructing functions into and out of products of $\RR$.
For example, the interpretation of $(\neuron_n)$ in (\ref{eqn:network})
becomes\vspace{-8pt}
\[
\RR^n\times \RR^n\times \RR \xto{\sem{\innerprod{n}}\times \id[\RR]}\RR\times \RR\xto{\sem{+}}\RR\xto{\sem{\sigmoid}}\RR.
\vspace{-8pt}
\]
where $\sem{\innerprod{n}}$, $\sem{+}$ and $\sem{\sigmoid}$ are the usual inner product, addition
and the sigmoid function on $\RR$, respectively.

Inside this category, we can straightforwardly study the first order language without $\lambda$'s, and automatic differentiation.
In fact, we can prove the following by plain induction on the syntax:\\
\emph{The interpretation of the (syntactic) forward AD $\Dsyn{\trm}$ of a first-order term
$\trm$ equals the usual (semantic) derivative of the interpretation of $\trm$ as a smooth function.}
%\begin{theorem}[Semantic correctness of $\Dsynsymbol$ for first order programs]\label{thm:first-order-correctness}
%For\\ $\var_1,\ldots,\var_n:\reals\vdash{\trm}:{\reals}$ not involving any $\lambda$'s,
%we have that $\sem{\Dsyn{\trm}}=\Dsem{\sem{\trm}}$.
%\end{theorem}
%\sss{A problem with this theorem is that we have not formally defined $\sem\trm$, and to define it when $\trm$ uses product types is routine but slightly fiddly, because e.g.~$\sem{\reals*(\reals*\reals)}=\sem{\reals*\reals*\reals}$. I am not sure what to do about this. Perhaps we should leave the theorem informal at this point.} \mv{Do we only care about soundness of beta/eta in the semantics? Or do we also want something in the reverse direction like
%adequacy of the semantics w.r.t. observational equivalence (where beta-reduction is the evaluation strategy I guess)?}

However, as is well known, the category $\CartSp$ does not support function spaces. To see this, 
notice that we have polynomial terms \vspace{-5pt}
\[\var_1,\ldots,\var_d:\reals\vdash \lambda \var[2].\,\textstyle\sum_{n=1}^d \var_n\var[2]^n:\reals\to\reals\vspace{-5pt}\]
for each $d$, and so if we could interpret $(\reals\to \reals)$ as a Euclidean space 
$\RR^p$ then, by interpreting these polynomial expressions, we would 
be able to find continuous injections $\RR^d\to \RR^p$ for every $d$, which is topologically impossible for any~$p$, for example as a consequence of the 
Borsuk-Ulam theorem (see \ifx\fossacsversion\undefined Appx.~\ref{sec:man_not_ccc}\else\cite{hsv-fossacs2020}, Appx.~A\fi).

This means that we cannot interpret the functions $(\layer)$ and $(\compose)$ from~(\ref{eqn:network}) in $\CartSp$, as they are higher order functions,
even though they are very useful and innocent building blocks for differential programming!
%We call it innocent as any first order function built using $\tStack{n}{m}{k}{l}$ and other smooth first order functions
%is again an ordinary smooth function.
%The neural network $\tNn{n}{m}$ is a prime example!
Clearly, we could define neural nets such as~(\ref{eqn:network}) directly as smooth functions 
without any higher order subcomponents, though that would quickly become cumbersome for deep networks.
A problematic consequence of the lack of a semantics for higher order differential programs is that we have no obvious way of establishing compositional semantic correctness of $\Dsynsymbol$ for the given implementation of~(\ref{eqn:network}).

\vspace{-10pt}
\subsubsection{Diffeological spaces.}
This motivates us to turn to a more general notion of differential geometry for our semantics, based on 
\emph{diffeological spaces} \cite{iglesias2013diffeology}.
The key idea will be that a higher order function
is called smooth if it sends smooth functions to smooth functions, meaning that we can never use it
to build first order functions that are not smooth. For example, $(\compose)$ in~(\ref{eqn:network}) has this property.
\begin{definition}
	A \emph{diffeological space} $(X,\plots{X})$ consists of a set $X$ together with, for each $n$ and each open subset $U$ of $\RR^n$,  a set $\plots{X}^U\subseteq [U\to X]$ of functions, called \emph{plots}, such that
	\begin{tightitemize}
	 	\item all constant functions are plots;
	 	\item if $f:V\to U$ is a smooth function and $p\in\plots{X}^U$, then $f;p\in\plots{X}^V$;
     \item if $\seq[i\in I]{p_i\in\plots{X}^{U_i}}$ is a compatible family of plots $(x\in U_i\cap U_j\Rightarrow p_i(x)=p_j(x))$
     and $\seq[i\in I]{U_i}$ covers $U$,
     then the gluing $p:U\to X:x\in U_i\mapsto p_i(x)$ is a plot.
	 \end{tightitemize} 
\end{definition}
We call a function $f:X\to Y$ between diffeological spaces \emph{smooth} if, for all plots
$p\in\plots{X}^U$, we have that $p;f\in \plots{Y}^U$. We write $\Diff(X,Y)$ for the set of smooth maps from $X$ to $Y$. 
Smooth functions compose, and so we have a category $\Diff$ of diffeological spaces and smooth functions.
%We can interpret a subset $Y\subseteq X$ of a diffeological space $X$ as a \emph{subspace}
%by taking the plots of $X$ which factor over $Y$ as its plots.

A diffeological space is thus a set equipped with structure.
Many constructions of sets carry over straightforwardly to diffeological spaces.
%Hence, in particular, it is well-pointed, meaning that for $f,g:X\to Y$, we have that 
%$f=g $ iff $f(x)=g(x)$ for all $x:1\to X$
%In particular, morphisms are monos if and only if they are injective.
%Probably not needed?

\begin{example}[Cartesian diffeologies]\label{ex:cartesian-diffeologies}
Each open subset $U$ of $\RR^n$ can be given the structure of a diffeological space by taking all the
smooth functions $V\to U$ as $\plots{U}^V$.
It is easily seen that smooth functions from $V\to U$ in the traditional sense coincide with
smooth functions in the sense of diffeological spaces.
Thus diffeological spaces have a profound relationship with ordinary calculus.
\end{example}
 In categorical terms, this gives a full embedding of $\CartSp$ in $\Diff$. 
\begin{example}[Product diffeologies]
Given a family $\seq[i\in I]{X_i}$ of diffeological spaces,
we can equip the product $\prod_{i\in I}X_i$ of sets with the
\emph{product diffeology} in which $U$-plots are precisely the functions
of the form $\seq[i\in I]{p_i}$ for $p_i\in\plots{X_i}^U$.  
\end{example}
This gives us the categorical product in $\Diff$.
\begin{example}[Functional diffeology]
We can equip the set $\Diff(X,Y)$ of smooth functions between diffeological spaces with the \emph{functional diffeology}
in which $U$-plots consist of functions $f:U\to \Diff(X,Y)$ such that 
$(u,x)\mapsto f(u)(x)$ is an element of $\Diff(U\times X, Y)$.
\end{example}
This specifies the categorical function object in $\Diff$.

\vspace{-10pt}
\subsubsection{Semantics and correctness of AD.}
We can now give a denotational semantics to our language from \S~\ref{sec:simple-language}. 
We interpret each type $\ty$ as a set $\sem \ty$ equipped with the relevant diffeology,
by induction on the structure of types:
\vspace{-7pt}
\[
\sem \reals \defeq \RR\qquad
\sem{\tProd{\ty_1}{\dots}{\ty_n}}\ \defeq\ \textstyle\prod_{i=1}^n\sem{\ty_i}
\qquad
\sem{\ty\To\ty[2]} \defeq \Diff(\sem \ty,\sem{\ty[2]})
\vspace{-7pt}
\]
A context $\Gamma=(\var_1\colon\ty_1\dots \var_n\colon \ty_n)$ is interpreted as a diffeological space
$\sem \Gamma\defeq \prod_{i=1}^n\sem{\ty_i}$. 
Now well typed terms $\Gamma\vdash \trm:\ty$ are interpreted as smooth functions
$\sem \trm:\sem\Gamma\to \sem \ty$, giving a meaning for $\trm$ for every valuation of the context. 
This is routinely defined by induction on the structure of typing derivations. 
Constants $\cnst:\reals$ are interpreted as constant functions;
and the first order operations ($+,*,\sigmoid$) are interpreted by composing with the corresponding functions, which are smooth.
For example, $\sem{\sigmoid(t)}(\rho)\defeq\sigmoid(\sem t(\rho))$, where $\rho\in\sem\Gamma$. 
Variables are interpreted as $\sem{\var_i}(\rho) \defeq \rho_i$. 
The remaining constructs are interpreted as follows, and it is straightforward to show that smoothness is preserved. \vspace{-6pt}
\begin{align*}
&
\sem{\tTriple {\trm_1}{\dots}{ \trm_n}}(\rho)\defeq
(\sem{\trm_1}(\rho),\dots,\sem{\trm_n}(\rho))
&&
\sem{\fun{\var{:}\ty}{\trm}}(\rho)(a)\defeq
\sem {\trm}(\rho,a)\ \text{($a\in \sem {\ty}$)}
\\
&
\sem{\tMatch{\trm}{{.}{.}{.} }{\trm[2]}}(\rho)\defeq
\sem{\trm[2]}(\rho,\sem{\trm}(\rho))
&&
\sem{\trm\,\trm[2]}(\rho)\defeq
\sem{\trm}(\rho)(\sem{\trm[2]}(\rho))\vspace{-6pt}
\end{align*}
%\noindent As is standard in a cartesian closed category, the interpretation $\sem{-}$ respects the $\beta\eta$-equational theory: $\trm\beeq \trm[2]$ implies $\sem{\trm}=\sem{\trm[2]}$.

Notice that a term
$\var_1\colon\reals,\dots,\var_n\colon\reals\vdash \trm : \reals$
is interpreted as a smooth function
$\sem \trm:\RR^n\to \RR$, even if $t$ involves higher order functions (like~(\ref{eqn:network})).
Moreover the macro differentiation $\Dsyn\trm$ is a function 
$\sem {\Dsyn \trm}:(\RR\times \RR)^n\to (\RR\times \RR)$. 
This enables us to state a limited version of our main correctness theorem:
\begin{theorem}[Semantic correctness of $\Dsynsymbol$ (limited)]
  \label{thm:fwd-cor-basic}
  For any term\\ $\var_1\colon\reals,\dots,\var_n\colon\reals\vdash \trm : \reals$, the function
  $\sem {\Dsyn \trm}$ is the dual numbers representation \eqref{eqn:dualnumber} of
  $\sem \trm$.
  In detail: for any smooth functions
  $f_1\dots f_n:\RR\to\RR$, 
  \[(f_1,\nabla f_1,\dots, f_n,\nabla f_n);\sem{\Dsyn\trm}
    =
    \big((f_1\dots f_n);\sem\trm,\nabla ((f_1\dots f_n);\sem{\trm})\big)\text.
  \]
\end{theorem}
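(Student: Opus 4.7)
The correctness statement at type $\reals$ cannot be proved by a naive induction on $\trm$, because subterms may inhabit higher types (e.g. the $\layer$ and $\compose$ helpers in~(\ref{eqn:network})). I will therefore proceed by a logical relations argument, strengthening the statement to a relation between curves at every type, as hinted in the introduction.

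Define, for each type $\ty$, a relation $R_\ty$ between smooth curves $\gamma\colon\RR\to\sem{\ty}$ and $\gamma'\colon\RR\to\sem{\Dsyn\ty}$ as follows. At base type, $(\gamma,\gamma')\in R_\reals$ iff $\gamma'(t)=(\gamma(t),\nabla\gamma(t))$ for all $t\in\RR$. At product types, $R$ is defined componentwise. At a function type, $(\phi,\phi')\in R_{\ty\To\ty[2]}$ iff for every $(\delta,\delta')\in R_\ty$ the induced curves $t\mapsto\phi(t)(\delta(t))$ and $t\mapsto\phi'(t)(\delta'(t))$ are related at $\ty[2]$; note that these are smooth by the universal property of the functional diffeology. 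The relation extends to contexts componentwise.

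Next I will prove the \emph{fundamental lemma}: for every typing derivation $\ctx\vdash\trm:\ty$ and every $(\gamma,\gamma')\in R_\ctx$, the pair $(\gamma;\sem{\trm},\gamma';\sem{\Dsyn\trm})\in R_\ty$. The proof is by induction on the typing derivation. For variables and tupling/pattern matching the step is immediate from the definition. For $\cnst$, $+$, $*$ and $\sigmoid$ the step reduces to the familiar derivative identities (derivative of a constant is zero; sum, Leibniz and chain rules; $\nabla\sigmoid(x)=\sigmoid(x)(1-\sigmoid(x))$), which is precisely what the inductive clauses of $\Dsynsymbol$ on these operations encode. For $\lambda$-abstraction, I unfold the definition of $R_{\ty\To\ty[2]}$: given a test curve $(\delta,\delta')\in R_\ty$, the induced pair is $(t\mapsto\sem{\trm}(\gamma(t),\delta(t)),\, t\mapsto\sem{\Dsyn\trm}(\gamma'(t),\delta'(t)))$, which is in $R_{\ty[2]}$ by the inductive hypothesis applied to the pair $((\gamma,\delta),(\gamma',\delta'))\in R_{\ctx,\var:\ty}$. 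For application, I simply instantiate the quantifier in the definition of $R_{\ty\To\ty[2]}$ with the curves produced by the inductive hypotheses for the operator and the operand.

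To conclude the theorem, given smooth $f_1,\ldots,f_n\colon\RR\to\RR$, set $\gamma(t):=(f_1(t),\ldots,f_n(t))$ and $\gamma'(t):=((f_1(t),\nabla f_1(t)),\ldots,(f_n(t),\nabla f_n(t)))$; these lie in $R_\ctx$ by construction. Applying the fundamental lemma to $\ctx\vdash\trm:\reals$ yields $(\gamma;\sem{\trm},\gamma';\sem{\Dsyn\trm})\in R_\reals$, which unfolds to exactly the desired equality. I expect the main obstacle to be bureaucratic rather than conceptual: carefully setting up the relation as a relation between \emph{curves} (not points) so that the function-type clause is well-formed, and verifying that all the composites that appear in the induction are smooth enough to be admissible curves; this is automatic in $\Diff$ because smoothness is preserved by composition, pairing, and evaluation, but it is the reason the first-order category $\CartSp$ would not suffice. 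The categorical cleanup and the extension beyond first-order output types will be handled later in \S\ref{sec:correctness}.
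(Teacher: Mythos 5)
Your proposal is correct and follows essentially the same route as the paper: the relation you define at each type (derivative pairs of curves at $\reals$, componentwise at products, the standard quantification over related test curves at function types) is exactly the paper's logical relation $S_\ty$ on $\RR$-plots, and your fundamental lemma, its proof by induction on typing derivations, and the final instantiation with $(f_i,\nabla f_i)$ match the paper's argument. No gaps to report.
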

(For instance, if $n=2$, then 
$\sem{\Dsyn\trm}(\var_1,1,\var_2,0)=
(\sem\trm(\var_1,\var_2),\frac{\partial\sem\trm(\var,\var_2)}{\partial \var}(\var_1))$.)
\begin{proof}
  We prove this by logical relations.  
  Although the following proof is elementary, we found it by
  using the categorical methods in \S~\ref{sec:correctness}.

For each type $\ty$, we define a binary relation
 $S_{\ty}$ between curves in $\sem{\ty}$ and curves in $\sem{\Dsyn{\ty}}$,
 i.e.~$S_{\ty}\subseteq  \plots{\sem{\ty}}^\RR\times
 \plots{\sem{\Dsyn{\ty}}}^\RR$,
 by induction on $\ty$:
\begin{tightitemize}
\item $S_{\reals}\defeq\{(f,(f,\nabla f))~|~f:\RR\to\RR\text{ smooth}\}$;
\item $S_{\bProd{\ty}{\ty[2]}} \defeq \{\sPair{\sPair{f_1}{g_1}}{\sPair{f_2}{g_2}}\mid (f_1,f_2)\in S_{\ty}, (g_1,g_2)\in S_{\ty[2]}\}$;
\item $S_{\ty\To \ty[2]} \defeq \{(f_1,f_2) \mid \forall (g_1,g_2) \in S_{\ty}. (x{\mapsto} f_1(x)(g_1(x)),x{\mapsto} f_2(x)(g_2(x))) \in S_{\ty[2]} \}$.
\end{tightitemize}

\noindent Then, we establish the following `fundamental lemma':

\begin{quotation}
\noindent If $\var_1{:} \ty_1,{.}{.}{.}, \var_n {:} \ty_n \vdash \trm : \ty[2]$
and, for all $1{\leq} i{\leq} n$, for all smooth
$f_i: \RR\to \sem{\ty_i}$ and $g_i : \RR\to \sem{\Dsyn{\ty_i}}$ such that $(f_i,g_i)$ is in $S_{\ty_i}$,
we have that 
$\left((f_1,\ldots,f_n);\sem{\trm},(g_1,\ldots,g_n);\sem{\Dsyn{\trm}}\right)$
is in $S_{\ty[2]}$.
\end{quotation}
 % \noindent For each type $\ty$, we define a set
%  $S_{\ty}$ of plots,
%  where $S_{\ty}\subseteq  \Diff(\RR,\sem{\Dsyn{\ty}})$:
% \begin{tightitemize}
% \item $S_{\reals} = \{ (f,\nabla f) : \RR \to \RR\times\RR \mid f:\RR\to\RR \text{ is smooth} \}$;
% \item $S_{\bProd{\ty}{\ty[2]}} = \{\sPair{f}{g}\mid f\in S_{\ty}, g\in S_{\ty[2]}\}$;
% \item $S_{\ty\To \ty[2]} = \{f \mid \forall g \in S_{\ty}. x\mapsto f(x)(g(x)) \in S_{\ty[2]} \}$.
% \end{tightitemize}

% \noindent Then, we establish the following `fundamental lemma':

% \begin{quotation}
% \noindent If $\var_1{:} \ty_1,{.}{.}{.}, \var_n {:} \ty_n \vdash \trm : \ty[2]$
% and, for all $1{\leq} i{\leq} n$,
% $\var[2]_1{.}{.}{.}\var[2]_m : \reals \vdash \trm[2]_i : \ty_i$
% %\mv{Why are we restricting $\var[2]_i$ to be of type $\reals$ here?}
% %\sss{Because that is what we need for this statement of the result}
% is such that
% $\sTriple{f_1,\nabla f_1)}{{.}{.}{.}}{f_m,\nabla f_m};\sem{\Dsyn{\trm[2]_i}} \in S_{\ty_i}$
% for all smooth ${f_i : \RR\to \RR}$,
% then
% $\sTriple{f_1,\nabla f_1}{{.}{.}{.}}{f_m,\nabla f_m};\sem{\Dsyn{\subst{\trm}{\sfor{ \var_1}{\trm[2]_1},
% {.}{.}{.},\sfor{ \var_n}{\trm[2]_n} }} } \in S_{\ty[2]}$
% for all $f_i~:~\RR~\to~\RR$.
% \end{quotation}
This is proved routinely by induction on the typing derivation of $\trm$.
The case for $*$ relies on the precise definition of $\Dsyn{\trm*\trm[2]}$,
and similarly for $+,\sigmoid$. 

We conclude the theorem from the fundamental lemma by considering the case where $\ty_i=\ty[2]=\reals$,
$m=n$ and $s_i=y_i$. \qed
\end{proof}

\section{Extending the language: variant and inductive types}\label{sec:extended-language}
In this section, we show that the definition of forward AD and the semantics generalize
if we extend the language of \S \ref{sec:simple-language} with variants
and inductive types.
As an example of inductive types, we consider lists.
This specific choice is only for expository purposes and the whole
development works at the level of generality of arbitrary algebraic
data types generated as initial algebras of (polynomial) type constructors formed by
finite products and variants.

% \mv{Explain how we can also generalize everything to an arbitrary set of
% operations as long as it's closed under derivatives and we interpret
% the symbol of the derivative as the derivative of the symbol.}
Similarly, our choice of operations is for expository purposes. More generally, assume given a family of operations $(\Op_n)_{n\in\NN}$ indexed by their arity $n$. Further assume that each $\op\in\Op_n$ has type $\reals^n\to \reals$. We then ask for a certain closure of these operations under differentiation, that is we define\\
$
\begin{array}{ll}
	\Dsyn{\op(\trm_1,\ldots,\trm_n)}\defeq~ &\pMatch{\Dsyn{\trm_1}}{\var_1}{\var_1'}
                       { \ldots \to\pMatch{\Dsyn{\trm_n}}{\var_n}{\var_n'}
                       {\\ &\tPair{\op(\var_1,\ldots,\var_n)}{\sum_{i=1}^n\var_i' *\partial_i\op(\var_1,\ldots,\var_n)}}}
\end{array}
$\\
where $\partial_i\op(\var_1,\ldots,\var_n)$ is some chosen term in the language, involving free variables from $\var_1,\ldots,\var_n$, which we think of as implementing the partial derivative of $\op$ with respect to its $i$-th argument.
For constructing the semantics, every $\op$ must be interpreted by some smooth function, and, to establish correctness, the semantics 
of $\partial_i\op(\var_1,\ldots,\var_n)$ must be the semantic $i$-th partial derivative of the semantics of
$\op(\var_1,\ldots,\var_n)$.

\vspace{-10pt}
\subsubsection{Language.}\label{sec:extended-language-language}
We additionally consider the following types and terms:
% , where we
% write 
% $\bool$ for $\Variant{
%     \Inj{\tTrue}{\Unit}
%     \vor  \Inj{\tFalse}{\Unit}
%   }$,  $\ifelse{\trm}{\trm[2]}{\trm[3]}$ for $\vMatch{\trm  }{
%     \Inj{\tTrue}{\_}\To{\trm[2]}
% \vor  \Inj{\tFalse}{\_}\To{\trm[3]}
% }$:
% \mv{Currently, we are not using the Booleans at all so we could consider removing this.}

\noindent\begin{syntax}
    \ty, \ty[2], \ty[3] & \gdefinedby & & \syncat{types}                          \\
    &\gor& \Variant{
        \Inj{\Cns_1}{\ty_1}
        \vor \ldots \vor
        \Inj{\Cns_n}{\ty_n}
      }        &\synname{variant}          \\
\end{syntax}%
~
\begin{syntax}
&\gor\quad\,& \List{\ty}                 & \synname{list}\\
%&\gor& \Tree{\ty}                 & \synname{tree} \\ 
\end{syntax}

\noindent\begin{syntax}
    \trm, \trm[2], \trm[3] & \gdefinedby & & \syncat{terms}                          \\
    &\gor&\tInj\ty\Cns\trm               & \synname{variant constructor}             \\
    &\gor& \tNil
    \ \gor\ \tCons{\trm}{\trm[2]}          & \synname{empty list and cons}\\
%   &\gor& \tLeaf\trm
%    \ \gor\ \tNode\trm{\trm[2]}{\trm[3]} & \synname{tree leaves and nodes}\\
    &\gor& \vMatch {\trm  }  {
                        \Inj{\Cns_1}{\var_1}\To{\trm[2]_1}
                \vor \cdots
                \vor  \Inj{\Cns_n}{\var_n}\To{\trm[2]_n}
                }           & \synname{pattern matching: variants}\\
    &\gor& \lFold{\var_1}{\var_2}{\trm}{\trm[2]}{\trm[3]} & \synname{list fold}\\
%    &\gor& \tFold{\var_1}{\var_2}{\var_3}{\trm}{\trm[2]}{\var[2]}{\trm[3]} & \synname{tree fold}\\

\end{syntax}

We extend the type system according to:

\noindent\[
\begin{array}{@{}c@{}}
  \inferrule{
    \Ginf\trm{\ty_i}
  }{
    \Ginf{\tInj\ty{\Cns_i}\trm}{\ty}
  }((\Inj{\Cns_i}\ty_i) \in \ty)
\quad
  \inferrule{
    ~
  }{
    \Ginf \tNil {\List{\ty}} 
  }
  \quad
  \inferrule{
  \Ginf \trm \ty
  \\
  \Ginf {\trm[2]} {\List{\ty}}
  }{
  \Ginf {\tCons{\trm}{\trm[2]}} {\List{\ty}}
  }
 % \\
  % \inferrule{
  %   \Ginf \trm \ty
  % }{
  %   \Ginf {\tLeaf{\trm}} {\Tree{\ty}} 
  % }
  % \quad
  % \inferrule{
  % \Ginf \trm {\Tree\ty}
  % \\
  % \Ginf {\trm[2]} \ty
  % \\
  % \Ginf {\trm[3]} {\Tree{\ty}}
  % }{
  % \Ginf {\tNode{\trm}{\trm[2]}{\trm[3]}} {\Tree{\ty}}
  % }
\\
  \inferrule{
    \Ginf\trm{\Variant{
                \Inj{\Cns_1}{\ty_1}
                \vor \ldots \vor
                \Inj{\Cns_n}{\ty_n}}}
    \\
    \text{for each $1 \leq i \leq n$: }
    \Ginf[, \var_i : \ty_i]{\trm[2]_i}{\ty}
  }{
    \Ginf{\vMatch \trm
                {\begin{array}[t]{@{}l@{\,}l@{}l@{}}
                    \Inj{\Cns_1}{\var_1}\To{\trm[2]_1}
                    \vor\cdots
                    \vor\Inj{\Cns_n}{\var_n}&\To{\trm[2]_n}
    }}
    \ty
                  \end{array}
  }
  \\
  \inferrule{
    \Ginf {\trm[2]} {\List{\ty}}
    \\
    \Ginf {\trm[3]} {\ty[2]}
    \\
    \Ginf[{,\var_1:\ty,\var_2:\ty[2]}] {\trm} {\ty[2]}
    }{
    \Ginf {\lFold{\var_1}{\var_2}{\trm}{\trm[2]}{\trm[3]}} {\ty[2]}
    }
% \\
% \inferrule{
%   \Ginf {\trm[2]} {\Tree{\ty}}
%   \\
%   \Ginf[{,\var[2]:\ty}] {\trm[3]} {\ty[2]}
%   \\
%   \Ginf[{,\var_1:\ty,\var_2:\ty[2],\var_3:\ty[2]}] {\trm} {\ty[2]}
%   }{
%   \Ginf {\tFold{\var_1}{\var_2}{\var_3}{\trm}{\trm[2]}{\var[2]}{\trm[3]}} {\ty[2]}
%   }
\end{array}
\]

We can then extend $\Dsynsymbol$ to our new types and terms by

\noindent\[
\begin{array}{lr}
\Dsyn{\Variant{\Inj{\Cns_1}{\ty_1}\vor \ldots \vor\Inj{\Cns_n}{\ty_n}}} \defeq 
\Variant{\Inj{\Cns_1}{\Dsyn{\ty_1}}\vor \ldots \vor\Inj{\Cns_n}{\Dsyn{\ty_n}}}\qquad\, &
\Dsyn{\List{\ty}} \defeq \List{\Dsyn{\ty}}
%\Dsyn{\Tree{\ty}} \defeq \Tree{\Dsyn{\ty}}
\end{array}    
\]

\noindent\[
\begin{array}{l}
\Dsyn{\tInj\ty\Cns\trm} \defeq \tInj{\Dsyn{\ty}}\Cns{\Dsyn\trm} \qquad\qquad\quad
\Dsyn{\tNil} \defeq \tNil\qquad\quad\qquad
\Dsyn{\tCons{\trm}{\trm[2]}} \defeq \tCons{\Dsyn{\trm}}{\Dsyn{\trm[2]}}\\
% \Dsyn{\tLeaf\trm} \defeq \tLeaf{\Dsyn{\trm}}\qquad\quad
% \Dsyn{\tNode\trm{\trm[2]}{\trm[3]}} \defeq \tNode{\Dsyn{\trm}}{\Dsyn{\trm[2]}}{\Dsyn{\trm[3]}}\\
\Dsyn{\vMatch {\trm  }  {
    \Inj{\Cns_1}{\var_1}\To{\trm[2]_1}
\vor \cdots
\vor  \Inj{\Cns_n}{\var_n}\To{\trm[2]_n}
}} \defeq\\
\qquad\vMatch {\Dsyn\trm  }  {
    \Inj{\Cns_1}{\var_1}\To{\Dsyn{\trm[2]_1}}
\vor \cdots
\vor  \Inj{\Cns_n}{\var_n}\To{\Dsyn{\trm[2]_n}}
} \\
\Dsyn{\lFold{\var_1}{\var_2}{\trm}{\trm[2]}{\trm[3]}} \defeq
\lFold{\var_1}{\var_2}{\Dsyn\trm}{\Dsyn{\trm[2]}}{\Dsyn{\trm[3]}}
%\\
%\Dsyn{\tFold{\var_1}{\var_2}{\var_3}{\trm}{\trm[2]}{\var[2]}{\trm[3]}} \defeq\\
%\qquad\tFold{\var_1}{\var_2}{\var_3}{\Dsyn\trm}{\Dsyn{\trm[2]}}{\var[2]}{\Dsyn{\trm[3]}} 
\end{array}
\]

To demonstrate the practical use of expressive type systems for
differential programming, we consider the following two examples.
\begin{example}[Lists of inputs for neural nets]
Usually, we run a neural network on a large data set, the size of 
which might be determined at runtime.
To evaluate a neural network on multiple inputs, in practice, one often sums the outcomes.
This can be coded in our extended language as follows.
Suppose that we have a network $f:\bProd{\reals^n}{P}\To\reals$ that operates on single input vectors. 
We can construct one that operates on lists of inputs as follows:
\[
g\defeq \fun{\tTuple{l,w}}{\lFold{\var_1}{\var_2}{f\tTuple{\var_1,w} + \var_2}{l}{\underline{0}}} :
\bProd{\List{\reals^n}}{P}\To\reals
\]
\end{example}

\begin{example}[Missing data]
In practically every application of statistics and machine learning,
we face the problem of \emph{missing data}:
for some observations, only partial information is available.
% For example, if our data originates from a survey, participants might 
% have filled out some but not all questions.
% There are standard solutions for dealing with this problem: ranging from 
% traditional imputation techniques to actual Bayesian modelling of the missing
% data.
% One thing these techniques have in common is that while they are conceptually
% straightforward, they result in cumbersome code, often relying on dummy-coding.
In an expressive typed programming language like we consider,
we can model missing data conveniently using the data type
$\Maybe{\ty}=\Variant{
  \Inj{\tNothingSym}{\Unit}\vor
  \Inj{\tJustSym}{\ty}
} $.
In the context of a neural network, one might use it as follows.
First, define some helper functions
%$\tFromMaybe{\ty} :\reals\To\Maybe{\reals}\To\reals$,
%$\tFromMayben{\ty}{n} : \reals^n\To(\Maybe{\reals})^n\To\reals^n$, and
%$\tMap{\ty}{\ty[2]}: : (\reals\To\reals[2])\To\List{\reals}\To\List{\reals[2]}  $ as, respectively,\\
\[
\begin{aligned}
&\tFromMaybe{\reals} \defeq\fun{\var}{\fun{m}{
  \vMatch {m  }  {
    \Inj{\tNothingSym}{\_}\To{\var}
\vor  \Inj{\tJustSym}{\var'}\To{\var'}
}  
}}%:\reals\To\Maybe{\reals}\To\reals
\\&
\tFromMayben{\ty}{n}\defeq
\fun{\tTriple{\var_1}{{.}{.}{.}}{\var_n}}{\fun{\tTriple{m_1}{{.}{.}{.}}{m_n}}{\tTriple{\tFromMaybe{\ty}\,\var_1\,m_1}{{.}{.}{.}}{\tFromMaybe{\ty}\,\var_n\,m_n}}  }\\&\qquad\qquad:(\Maybe{\reals})^n\To \reals^n\To\reals^n\\&
\tMap{\ty}{\ty[2]}\defeq\fun{f}{\fun{l}{\lFold{\var_1}{\var_2}{\tCons{f\, \var_1}{\var_2}}{l}{\tNil}}}
: (\ty\To\ty[2])\To\List{\ty}\To\List{\ty[2]}
\end{aligned}
\]
Given a neural network $f:\bProd{\List{\reals^k}}{P}\To\reals$,
we can build a new one that operates on 
on a data set for which some covariates (features) are missing, by passing in 
default values to replace the missing covariates:
\begin{multline*}
\fun{\tTuple{l,\tTuple{m,w}}}
f\tTuple{\tMap{\reals}{\reals}\, (\tFromMayben{\reals}{k}\,m)\, l
,w}
\\:\bProd{\List{(\Maybe{\reals})^k}}{\bProd{\reals^{k}}{P}}\To\reals\end{multline*}
Then, given a data set $l$ with missing covariates, we can perform automatic differentiation on this network to optimize, simultaneously, the ordinary network parameters~$w$ \emph{and} the default values for missing covariates $m$.
\end{example}

\subsubsection{Semantics.}\label{sec:extended-language-semantics}
In \S~\ref{sec:semantics} we gave a denotational semantics for the simple language in diffeological spaces. This extends to the language in this section, as follows.
As before, each type $\ty$ is interpreted as a diffeological space, which is a set equipped with a family of plots:
\begin{tightitemize}
\item A variant type $\Variant{
        \Inj{\Cns_1}{\ty_1}
        \vor \ldots \vor
        \Inj{\Cns_n}{\ty_n}
      } $ is inductively interpreted as the disjoint union
      $\textstyle\sem{\Variant{
        \Inj{\Cns_1}{\ty_1}
        \vor \dots \vor
        \Inj{\Cns_n}{\ty_n}
      }} \ \ \defeq \ \ \biguplus_{i=1}^n\sem {\ty_i}$ with $U$-plots\\
    $\plots{
    \sem{\Variant{
        \Inj{\Cns_1}{\ty_1}
        \vor \ldots \vor
        \Inj{\Cns_n}{\ty_n}
      }}
    }^U\hspace{-4pt}\defeq
    \left\{\coseq[j= 1]{U_j\xto {f_j}\sem{\ty_j}\to\biguplus_{i=1}^n\sem{\ty_i}}^n\hspace{-4pt}~\big|~U=\biguplus_{j=1}^n U_j,\;f_j\in\plots{\sem{\ty_j}}^{U_j}\right\}$.
  \item A list type $\List\ty$ is interpreted as the set of lists,
    $ \sem{\List\ty} \ \ \defeq\ \  \biguplus_{i=1}^\infty \sem\ty^i$
    with $U$-plots\\
    $\plots{
      \sem{\List\ty}
      }^U\defeq\left\{\coseq[j=1]{U_j\xto {f_j}\sem{\ty}^j\to\biguplus_{i=1}^\infty\sem{\ty}^i}^\infty~\big|~U=\biguplus_{j=1}^\infty U_j,\;f_j\in\plots{\sem{\ty}^j}^{U_j}\right\}$.
  \end{tightitemize}
  The constructors and destructors for variants and lists are interpreted as
  in the usual set theoretic semantics.
  %, showing immediately that the $\beta\eta$-laws 
  %are respected by the interpretation.
  It is routine to show inductively that these interpretations are smooth. Thus every term
  $\Gamma\vdash \trm:\ty$ in the extended language is interpreted as a smooth function
  $\sem \trm:\sem\Gamma\to\sem \ty$ between diffeological spaces. 
  
% However, it seems a bit much to immediately generalize to the setting of
% arbitrary diffeological spaces, just to interpret something as simple as
% finite coproducts.

(In this section we focused on a language with lists, but other inductive types are easily interpreted in the category of diffeological spaces in much the same way; the categorically minded reader may regard this as a consequence of $\Diff$ being a concrete Grothendieck quasitopos, e.g.~\cite{baez2011convenient}.)

%%% Local Variables:
%%% mode: latex
%%% TeX-master: "paper"
%%% End:

\section{Categorical analysis of forward AD and its correctness}\label{sec:correctness}
This section has three parts. First, we give a categorical account of the functoriality of AD (Ex.~\ref{ex:canonical-fwd}). Then we introduce our gluing construction, and relate it to the correctness of AD~(dgm.~(\ref{dgm:gluing})).
Finally, we state and prove a correctness theorem for all first order types by considering a category of manifolds~(Th.~\ref{thm:fwd-cor-full}). 
\vspace{-5pt}
\subsubsection{Syntactic categories.}
\begin{figure}[b]
  \framebox{\scalebox{0.8}{\begin{minipage}{1.22\linewidth}\vspace{-3mm}\input{beta-eta-all}
\end{minipage}}}
\caption{Standard $\beta\eta$-laws (e.g.~\cite{pitts1995categorical}) for products, functions, variants and lists. \label{fig:beta-eta}}
\end{figure}
Our language induces a syntactic category as follows.
\begin{definition}
	Let $\Syn$ be the category whose objects are types,
        and where a morphism $\ty[1]\to\ty[2]$ is a term in context 
        $\var:\ty[1]\vdash \trm:\ty[2]$ modulo the $\beta\eta$-laws
        (Fig.~\ref{fig:beta-eta}).
        Composition is by substitution. 
\end{definition}
For simplicity, we do not impose arithmetic identities such as $x+y=y+x$ in $\Syn$.
As is standard, this category has the following universal property.
\begin{lemma}[e.g.~\cite{pitts1995categorical}]\label{lem:syn-initial}
  For every bicartesian closed category $\catC$ with list objects,
  and every object $\freeF(\reals)\in\catC$ and
  morphisms $\freeF(\cnst)\in\catC(1, \freeF(\reals))$,
  $\freeF(+), \freeF(*)\in\catC(\freeF(\reals)\times \freeF(\reals),\freeF(\reals))$, $\freeF(\sigmoid)\in\Syn(\freeF(\reals),\freeF(\reals))$
  in $\catC$, there is a unique functor $\freeF:{\Syn\to\catC}$ respecting the interpretation and preserving the bicartesian closed structure as well as
 list objects.
%  In particular,
% $\freeF(\Variant{
%    \Inj{\Cns_1}{\ty_1}
%    \vor \ldots \vor
%    \Inj{\Cns_n}{\ty_n}
%  } )=\sum_{1\leq i\leq n}\freeF(\ty_i)$ and
%  $\freeF(\List{\ty})$ and $\freeF(\Tree{\ty})$ are the (codomain of the) initial algebras of $1+\freeF{\ty}\times (-)$ and
%  $\freeF{\ty}+(-)\times \freeF{\ty} \times (-)$, respectively.
\end{lemma}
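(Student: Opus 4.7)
The plan is to prove this by structural induction, in the standard style for initiality of syntactic categories of simply typed lambda calculi with algebraic data. First I would define the functor $\freeF$ on types by recursion on type structure: sending $\reals$ to the given object $\freeF(\reals)$, product types to the chosen categorical products, function types $\ty\To\ty[2]$ to the exponential $\freeF(\ty[2])^{\freeF(\ty)}$, variant types to coproducts $\biguplus_{i}\freeF(\ty_i)$, and $\List{\ty}$ to the chosen list object over $\freeF(\ty)$. Contexts $\var_1{:}\ty_1,\dots,\var_n{:}\ty_n$ go to the product $\freeF(\ty_1)\times\cdots\times\freeF(\ty_n)$.

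Next I would define $\freeF$ on morphisms by induction on typing derivations, interpreting each term $\ctx\vdash\trm:\ty$ as a morphism $\freeF(\ctx)\to\freeF(\ty)$ using the universal structure of $\catC$: variables as projections, tuples as pairings, pattern matching as composition with the universal property of products, $\lambda$-abstraction as currying (transpose of the exponential adjunction), application as evaluation, variant constructors as coproduct injections (followed by the pairing with the context), the variant case-construct as the cotupling given by the coproduct universal property, $\tNil$ and cons as the algebra structure of the list object, and $\lFold$ as the unique morphism out of the initial algebra determined by its universal property. Operations $+,*,\cnst,\sigmoid$ are interpreted via the given data.

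The crucial step is to verify that this definition factors through the $\beta\eta$-equivalence of Figure~\ref{fig:beta-eta}, i.e.~that equal terms in $\Syn$ are sent to equal morphisms in $\catC$. Each $\beta$-law corresponds directly to the computation rule of the relevant categorical structure (e.g.~currying followed by evaluation equals the original morphism; pairing composed with a projection equals the component; case on an injection equals substitution; fold on $\tNil$/$\tCons$ equals the algebra action), and each $\eta$-law corresponds to the uniqueness part of the relevant universal property. One also checks a substitution lemma $\freeF(\subst{\trm}{\sfor{\var}{\trm[2]}}) = \freeF(\trm)\;\text{after}\;(\id,\freeF(\trm[2]))$, which is required in several cases and proved by a separate straightforward induction on $\trm$; functoriality (preservation of identities and composition) then follows from this lemma and the definition of composition in $\Syn$ as substitution. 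Preservation of the bicartesian closed structure and list objects is immediate from how $\freeF$ was defined on the type constructors.

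Finally, uniqueness: any functor $\freeF'$ with the stipulated preservation properties and the prescribed action on $\reals,\cnst,+,*,\sigmoid$ must agree with $\freeF$ on types by induction (the preservation clauses fix the action on every type constructor up to the canonical isomorphisms), and it must agree on morphisms because each term former is defined from structure preserved by $\freeF'$, so $\freeF'$ is determined at each inductive step. The main obstacle I anticipate is the $\lFold$ case: one has to interpret $\lFold$ via the unique algebra homomorphism out of the initial $\ty[2]+\ty\times(-)$-algebra (or equivalently via the iterator associated to the list object) and then verify that the two $\beta$-laws plus the $\eta$-uniqueness rule for folds in Figure~\ref{fig:beta-eta} correspond precisely to existence and uniqueness of this homomorphism; everything else is routine bookkeeping familiar from standard treatments such as~\cite{pitts1995categorical}.
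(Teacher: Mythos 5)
Your proposal is correct and follows essentially the same route as the paper, which simply appeals to the standard freeness property of the syntactic category (citing Pitts) that you spell out: interpret types and terms by structural induction using the universal structure of $\catC$, check the $\beta\eta$-laws against the computation/uniqueness parts of the universal properties, and derive uniqueness from the preservation requirements. The only caveat worth noting is that strict uniqueness (rather than uniqueness up to canonical isomorphism) presupposes that ``preserving'' means preserving the chosen structure on the nose, a point the paper also leaves implicit.
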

\begin{proof}[notes]
  The functor $\freeF:\Syn\to \catC$ is a canonical denotational semantics
  for the language, interpreting types as objects of $\catC$ and terms as morphisms. 
  For instance,
  $\freeF({\ty\To\ty[2]})\defeq (\freeF\ty\To\freeF{\ty[2]})$,
  the function space in the category $\catC$,
  and $\freeF{(\trm\,\trm[2])}\defeq
  $ is the composite $(\freeF{\trm},\freeF{\trm[2]});\mathit{eval}$. 
%  \[\freeF {\ty[3]}\xto{(\freeF{\trm},\freeF{\trm[2]})}
%    \freeF{\ty\To\ty[2]}\times\freeF{\ty}
%    =
%    (\freeF{\ty}\To\freeF{\ty[2]})\times\freeF{\ty}
%    \xto{\mathit{eval}}
%    \freeF{\ty[2]}\text.
%  \]
  When $\catC=\Diff$,
the denotational semantics of the language in diffeological spaces (\S\ref{sec:semantics},\ref{sec:extended-language-semantics})
can be understood as the unique structure preserving functor 
$\sem-:\Syn\to \Diff$ satisfying 
$\sem \reals=\RR$, $\sem \sigmoid=\sigmoid$ and so on. 
  \qed
\end{proof}
\begin{example}[Canonical definition forward AD]\label{ex:canonical-fwd}
The forward AD macro $\Dsynsymbol$ (\S\ref{sec:simple-language},\ref{sec:extended-language-language}) 
arises as a canonical cartesian closed functor on $\Syn$. 
Consider the unique  
cartesian closed functor $\freeF:\Syn\to\Syn$ 
such that $\freeF(\reals)=\reals* \reals$,
$\freeF(\cnst)=\Dsyn\cnst$, 
%$\freeF(\sin)=\Dsyn{\sin(x)}$, 
$\freeF(\sigmoid)=\Dsyn{\sigmoid(x)}$, 
and\\
$
\begin{array}{l}
%&\freeF(\cnst) = \var:\Unit\vdash \Dsyn{\cnst}:\freeF(\reals)\\
\freeF(+) = \var[3]:\freeF(\reals)\t* \freeF(\reals)\vdash 
  \tMatch{\var[3]}{\var,\var[2]}{\Dsyn{\var+\var[2]}}:\freeF(\reals)\\
\freeF(*)=  \var[3]:\freeF(\reals)\t* \freeF(\reals)\vdash 
  \tMatch{\var[3]}{\var,\var[2]}{\Dsyn{\var*\var[2]}}:\freeF(\reals)
%&\freeF(\sin) = \var:\freeF(\reals)\vdash \Dsyn{\sin(\var)}:\freeF(\reals)
%\quad
%\freeF(\sigmoid) = \var:\freeF(\reals)\vdash :\freeF(\reals)\\
\end{array}
$\\
  Then for any type $\ty$, $F(\ty)=\Dsyn \ty$, and 
  for any term $x:\ty[1]\vdash \trm:\ty[2]$, 
  $F(\trm)=\Dsyn \trm$ as morphisms $F(\ty[1])\to F(\ty[2])$ in the syntactic category. \vspace{-10pt}
\end{example}

\subsubsection{Categorical gluing and logical relations.}
Gluing is a method for building new categorical models which has been used for many purposes, including logical relations and realizability~\cite{mitchell1992notes}. Our logical relations argument in the proof of Th.~\ref{thm:fwd-cor-basic} can be understood in this setting. In this subsection, for the categorically minded, we explain this, and in doing so we quickly recover a correctness result for the more general language in \S~\ref{sec:extended-language} and 
for arbitrary first order types.

We define a category $\Gl[U]$ whose objects are triples $(X,X',S)$ where $X$ and~$X'$ are diffeological spaces and $S\subseteq\plots X^U\times \plots {X'}^U$ is a relation between their $U$-plots. A morphism $(X,X',S)\to (Y,Y',T)$ is a pair of smooth functions $f\colon X\to Y$, $f'\colon X'\to Y'$, such that if 
$(g,g')\in S$ then $(g;f,g';f')\in T$.
The idea is that this is a semantic domain where we can simultaneously interpret the language and its automatic derivatives.
\begin{proposition}\label{prop:gluing}
  The category $\Gl[U]$ is bicartesian closed, has list objects, and the projection functor 
  $\projf:\Gl[U]\to\Diff\times \Diff$ preserves this structure. 
\end{proposition}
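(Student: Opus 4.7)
The plan is to lift each piece of bicartesian closed structure and the list object from $\Diff \times \Diff$ to $\Gl[U]$ by taking the obvious object componentwise on the diffeological part and defining the relation on $U$-plots in whichever way is forced by the universal property. Preservation under $\projf$ is then immediate by construction, so the content is entirely in checking that the lifted data actually satisfies the universal properties in $\Gl[U]$.

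I would proceed construction by construction. For the terminal object take $(1,1,S_1)$ where $S_1$ is the unique relation on the singletons of plots. Products are $(X,X',S)\times (Y,Y',T)\defeq (X\times Y,\,X'\times Y',\,\{((g_1,h_1),(g_2,h_2))\mid (g_1,g_2)\in S,\,(h_1,h_2)\in T\})$; projections and pairing are obviously related, and the relation is precisely what forces pairing to preserve relatedness. Coproducts are $(X{+}Y,\,X'{+}Y',\,S{+}T)$ where $(g,g')\in S{+}T$ iff there is a cover $U=\biguplus_i U_i$ on which each $(g|_{U_i},g'|_{U_i})$ factors through $S$ or through $T$ compatibly; this is forced because plots into coproducts in $\Diff$ are themselves defined by covers, so the copairing of two glued morphisms respects the relation. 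For the exponential define $(X,X',S)\Rightarrow (Y,Y',T)\defeq (X\Rightarrow Y,\,X'\Rightarrow Y',\,R)$ where $R$ consists of those pairs $(h,h')$ of $U$-plots such that for every $(g,g')\in S$ the pair $(u\mapsto h(u)(g(u)),\, u\mapsto h'(u)(g'(u)))$ lies in $T$; this is exactly what is needed for evaluation to be a $\Gl[U]$-morphism and is precisely the condition that curried maps preserve the relation.

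For list objects I would take $\List{(X,X',S)}\defeq (\List X,\List{X'},\List S)$, where $\List S$ is defined, analogously to the coproduct case, by: $(g,g')\in \List S$ iff there is a cover $U=\biguplus_i U_i$ and, on each $U_i$, plots $(g|_{U_i},g'|_{U_i})$ factor through some $S^{n_i}\subseteq \plots{X^{n_i}}^{U_i}\times\plots{X'^{n_i}}^{U_i}$ (with $S^{n}$ inductively defined from $S_1$ and products as above). Then $\tNil$ and cons lift to $\Gl[U]$-morphisms by construction, and given a $\Gl[U]$-algebra, the unique $\Diff\times\Diff$ fold respects relations by induction along the cover witnessing $\List S$.

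Finally, $\projf$ sends each structure to its underlying diffeological structure by definition, hence strictly preserves it. The main obstacles are conceptual rather than computational: one must choose the right relation on coproducts and on list objects so that the local/covering nature of plots into $\biguplus$ in $\Diff$ is reflected in the relation, and one must verify that the ``pointwise'' relation on the exponential genuinely makes evaluation and currying morphisms of $\Gl[U]$. Once the correct relations are identified, the universal properties transfer from $\Diff\times\Diff$ essentially for free.\qed
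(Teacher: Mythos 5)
Your terminal object, binary products, and exponential are exactly the right glued structure, the verifications you sketch for them are correct, and preservation under $\projf$ is indeed immediate. The genuine problem is the relation you put on coproducts (and, by the same device, on list objects). First, it is not well-defined: the third component $S$ of an object of $\Gl[U]$ is a bare subset of $\plots{X}^U\times\plots{X'}^U$, with no restriction maps along inclusions $U_i\hookrightarrow U$, so the clause ``$(g|_{U_i},g'|_{U_i})$ factors through $S$'' has no meaning. Second, if one repairs it in the natural way, the cover-based relation is too large, and the universal property fails whenever $U$ is disconnected (the proposition is stated for arbitrary $U$; only the later instantiations $U=\RR,\RR^k$ are connected, and there your definition happens to degenerate to the correct one because a decomposition $U=\biguplus_i U_i$ has a single nonempty piece). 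Concretely, let $R_{\min}$ consist of the pairs $(k,k')$ such that $k=g;\iota$ and $k'=g';\iota$ for the \emph{same} coprojection $\iota$ and some $(g,g')\in S$ (resp. $T$). The two coprojections are morphisms into $(X{+}Y,X'{+}Y',R_{\min})$, and their copairing out of your candidate coproduct is the identity on underlying spaces; it is a $\Gl[U]$-morphism only if your relation is contained in $R_{\min}$, which fails for a ``mixed'' pair landing in the $X$-summand on one piece of the decomposition and in the $Y$-summand on another. The underlying point is that relations in objects of $\Gl[U]$ satisfy no locality or gluing condition, so nothing forces a target relation to contain such mixed pairs; the locality of plots into $\biguplus$ in $\Diff$ must \emph{not} be imported into the relation.

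The forced relations are the minimal ones, i.e. direct images: for $X{+}Y$ as above, and for $\List{X}$ the pairs $(k,k')$ such that both factor through the same summand $\sem{}$-style power ($k$ through $X^n$, $k'$ through $X'^n$, for a single $n$) via tuples of plots that are componentwise $S$-related. With these relations, your checks (nil, cons, copairing, fold) go through by the same computations, with no covers anywhere. This is also where your route genuinely differs from the paper's: you verify the universal properties by hand, whereas the paper identifies $\Gl[U]$ with the full subcategory of monic objects in the comma category along the finite-product-preserving functor $\Diff(U,-)\times\Diff(U,-):\Diff\times\Diff\to\Set$ and cites general gluing results; in that setting colimits and initial algebras are computed componentwise followed by an image factorization, which outputs exactly the minimal relations above and explains structurally why no covering condition should appear.
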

\begin{proof}[notes]
The category $\Gl[U]$ is a full subcategory of the comma category $\id[\Set]\downarrow \Diff(U,-)\times \Diff(U,-)$. 
The result thus follows by the general theory of categorical gluing~(e.g.~\cite[Lemma~15]{johnstone-lack-sobocinski}). 
%For example, the function space $[(X,S)\Rightarrow (Y,T)]$ is $(\Diff(X,Y), \{f:\RR\to \Diff(X, Y)~|~\forall r\in\RR.\,\forall g\in S.\,\lambda r.\,f(r)(g(r))\in T\})$.
\qed
\end{proof}
We give a semantics $\semgl{-}=(\semgl{-}_0,\semgl{-}_1, S_{-})$ for the language in $\Gl[\RR]$, 
interpreting types $\ty$ as objects $(\semgl{ \ty}_0,\semgl{\ty}_1,S_{\ty})$,
and terms as morphisms. 
We let $\semgl{\reals}_0\defeq \RR$ and 
$\semgl{\reals}_1\defeq \RR^2$, with the relation
$S_\reals\defeq \{(f,(f,\nabla f))~|~f:\RR\to\RR \text{ smooth}\}$.
We interpret the constants $\cnst$ as pairs $\semgl{ \cnst}_0\defeq \cnst$ and
$\semgl{\cnst}_1\defeq (\cnst,0)$,
and we interpret $+,\times,\sigmoid$ in the standard way (meaning, like $\sem{-}$) in $\semgl{-}_0$, but according to the derivatives in $\semgl{-}_1$, for instance,
$\semgl{*}_1:\RR^2\times\RR^2\to \RR^2$ is 
\[\semgl{*}_1((x,x'),(y,y'))\defeq (xy,xy'+x'y)\text.\]
At this point one checks that these interpretations are indeed morphisms in $\Gl[\RR]$. 
This amounts to checking that these interpretations are dual numbers 
representations in the sense of~(\ref{eqn:dualnumber}). 
The remaining constructions of the language are interpreted using the categorical structure of $\Gl[\RR]$, following Lem.~\ref{lem:syn-initial}.

Notice that the diagram below commutes. One can check this by hand or note that it follows from the initiality of $\Syn$ (Lem.~\ref{lem:syn-initial}):
all the functors preserve all the structure. 
\vspace{-6pt}
\begin{equation}\label{dgm:gluing}
\xymatrix{
\Syn\ar[rr]^-{(\id,\Dsyn-)}\ar[d]_{\semgl-}&&\Syn\times \Syn\ar[d]^{\sem-\times \sem-}
\\
\Gl[\RR]\ar[rr]_-{\projf}&&\Diff\times\Diff
}
\end{equation}
We thus arrive at a restatement of the correctness theorem (Th.~\ref{thm:fwd-cor-basic}), which holds even for the extended language with variants and lists, because 
for any $x_1{.}{.}{.}x_n:\reals\vdash \trm:\reals$, 
the interpretations $(\sem \trm,\sem{\Dsyn \trm})$ are in the image of the projection $\Gl[\RR]\to \Diff\times\Diff$, 
and hence $\sem{\Dsyn\trm}$ is a dual numbers encoding~of~$\sem\trm$.

\subsubsection{Correctness at all first order types, via manifolds.}
We now generalize Theorem~\ref{thm:fwd-cor-basic} to hold at all first order types, not just the reals. To do this, we need to define the derivative of a smooth map between the interpretations of first order types.
We do this by recalling the well known theory of manifolds and tangent bundles. 

For our purposes, a smooth manifold $M$ is a second-countable Hausdorff
topological space together with a smooth atlas: an open cover $\cover$ together
with homeomorphisms $\seq[U\in\cover]{\phi_U:U\to \RR^{n(U)}}$ (called charts) such that $\phi_U^{-1};\phi_V$ is smooth on its domain of definition for all $U,V\in\cover$.
A function $f:M\to N$ between manifolds is smooth if $\phi^{-1}_U;f;\psi_V$ is
smooth for all charts $\phi_U$ and $\psi_V$ of $M$ and $N$, respectively.
Let us write $\Man$ for this category.

Our manifolds are slightly unusual because different charts in an atlas may have different finite dimension $n(U)$. Thus we consider manifolds with dimensions that are potentially unbounded, albeit locally finite. This does not affect the theory of differential geometry as far as we need it here.

% ({ We are considering \emph{unproper}\mv{I've never heard this term. Google also doesn't really show anything on proper, improper or unproper manifold. Is that what they're normally called?} 
% Note tmanifolds in the sense that the dimension is only locally fixed and the overall dimension potentially unbounded. This does not affect the theory of differential geometry as far as we need it here.})

Each open subset of $\RR^n$ can be regarded as a manifold. This lets us regard the category of manifolds $\Man$ as a full subcategory of the category of diffeological spaces. We consider a manifold $(X,\{\phi_U\}_U)$ as a diffeological space with the same carrier set $X$ and where the plots $\plots X ^U$ are the smooth functions in $\Man(U,X)$. A function $X\to Y$ is smooth in the sense of manifolds if and only if it is smooth in the sense of diffeological spaces~\cite{iglesias2013diffeology}. For the categorically minded reader, this means that we have a full embedding of $\Man$ into $\Diff$.
Moreover, the natural interpretation of the first order fragment of our language in $\Man$ coincides with that in $\Diff$.
That is, the embedding of $\Man$ into $\Diff$ preserves finite products and countable coproducts (hence initial algebras of polynomial endofunctors).
\begin{proposition}
  Suppose that a type $\ty$ is first order, i.e.~it is just built from reals, products, variants, and lists (or, again, arbitrary inductive types), and not function types. Then the diffeological space $\sem{\ty}$ is a manifold. 
\end{proposition}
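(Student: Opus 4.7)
The plan is to induct on the structure of first-order types, using the fact (remarked just before the proposition) that the embedding $\Man \hookrightarrow \Diff$ preserves finite products and countable coproducts. So it suffices to show, for each type former, that a corresponding construction on manifolds exists and agrees (as a diffeological space) with the construction used in the interpretation.

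First I would handle the base case: $\sem{\reals} = \RR$ is trivially a manifold, with the single-chart atlas given by the identity. For product types, given manifolds $M_i = \sem{\ty_i}$ with atlases $\cover_i = \{\phi_{U} : U \to \RR^{n(U)}\}$, the product carries the product atlas whose charts are $\phi_U \times \phi_V : U \times V \to \RR^{n(U)+n(V)}$; this gives a manifold structure on $\prod_i \sem{\ty_i}$ whose induced diffeology coincides with the product diffeology, by the embedding property.

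For variant types, I would take the disjoint union of the atlases: if $M_1, \dots, M_n$ are manifolds with atlases $\cover_1, \dots, \cover_n$, then $\biguplus_i M_i$ carries the atlas $\biguplus_i \cover_i$, with each chart of $M_i$ reinterpreted as a chart into the $i$-th summand of the disjoint union. The resulting topology is Hausdorff and second-countable (finitely many second-countable pieces), and the induced diffeology matches the coproduct diffeology $\plots{\biguplus_i \sem{\ty_i}}^U$ described in \S\ref{sec:extended-language-semantics}. For the list type, $\sem{\List \ty} = \biguplus_{i=1}^\infty \sem{\ty}^i$ is a countable coproduct of manifolds (using the product case for each $\sem{\ty}^i$), and a countable disjoint union of second-countable Hausdorff spaces is still second-countable and Hausdorff, so the same construction applies.

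The main subtlety, which is precisely why the paper adopted the slightly unusual definition of manifold allowing locally finite but potentially unbounded dimension, is the list case: the summands $\sem{\ty}^i$ generally have different dimensions, so the resulting atlas on $\biguplus_i \sem{\ty}^i$ has charts of unbounded dimension. This is not a problem under the paper's definition since dimensions are only required to be locally finite, and within each open chart they are. Once this is noted, the inductive step goes through, and the equality of the manifold-induced diffeology with $\sem{\ty}$ at each step follows from the preservation of finite products and countable coproducts by $\Man \hookrightarrow \Diff$. \qed
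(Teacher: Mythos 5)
Your proof is correct and follows essentially the same route as the paper's own (sketched) argument: induction on the structure of first-order types, building the manifold structure via finite products and countable disjoint unions and using the fact that the embedding $\Man\hookrightarrow\Diff$ preserves finite products and countable coproducts, with the locally-finite-but-unbounded-dimension convention handling the list case. The paper's note adds only the further (inessential) observation that each such $\sem{\ty}$ is abstractly isomorphic to a manifold of the form $\biguplus_{i=1}^{n}\RR^{d_i}$, via an isomorphism that is typically not the identity.
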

\begin{proof}[notes] This is proved by induction on the structure of types. In fact, one may show that every such $\sem\ty$ is isomorphic to a manifold of the form $\biguplus_{i=1}^n\RR^{d_i}$ where the bound $n$ is either finite or $\infty$, but this isomorphism is typically not an identity function. \qed
\end{proof}
The constraint to first order types is necessary because, e.g. the space $\sem{\reals\to\reals}$ is not a manifold, because of a Borsuk-Ulam argument (see Appx. \ref{sec:man_not_ccc}).
%\sss{reference/appendix}. 

We recall that the derivative of any morphism $f:M\to N$ of manifolds 
is given as follows.
For each point $x$ in a manifold $M$, define the \emph{tangent space} $\Dsemsymbol_x M$ to be the set $\{\gamma\in\Man(\RR,M)\mid \gamma(0)=x\}/\sim$ of equivalence classes $[\gamma]$ of smooth curves $\gamma$
in $M$ based at $x$, where we identify $\gamma_1\sim \gamma_2$ iff $\nabla (\gamma_1;f)(0)=\nabla(\gamma_2;f)(0)$ for all smooth $f:M\to\RR$.
%Observe that the charts around a point $x$ in a manifold $M$ determine the dimension $\dim_x(M)$ of $M$ at $x$ and let $\Dsemsymbol_x (M)\defeq \RR^{\dim_x(M)}$ be the \emph{tangent space} at $x$. 
The \emph{tangent bundle} of $M$ is the set $\Dsem{M}\defeq \biguplus_{x\in M} \Dsemsymbol_x (M)$. The charts of $M$ equip $\Dsem{M}$ with a canonical manifold structure.
Then for smooth $f:M\to N$, the derivative $\Dsem{f}:\Dsem{M}\to\Dsem{N}$ is defined as $\Dsem{f}\sPair{x}{[\gamma]}\defeq \sPair{f(x)}{[\gamma;f]}$.
All told, the derivative is a functor $\Dsemsymbol:\Man\to\Man$. 

As is standard, we can understand the tangent bundle of a composite space in terms of that of its parts.
\begin{lemma}\label{lemma:Dsem}
There are canonical isomorphisms $\Dsem{\biguplus_{i=1}^\infty M_i} \cong \biguplus_{i=1}^\infty \Dsem{M_i}$ and 
$\Dsem{M_1\times\ldots \times M_n}\cong \Dsem{M_1}\times\ldots\times \Dsem{M_n}$.
\end{lemma}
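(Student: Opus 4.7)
The plan is, in each case, to exhibit an explicit bijection on underlying sets and then observe that the canonical manifold atlases on both sides agree under the bijection, so it is automatically a diffeomorphism. Both isomorphisms should be natural, coming from functoriality of $\Dsemsymbol$ applied to the structural maps of the (co)product.

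For the coproduct, the candidate forward map is $\biguplus_{j} \Dsem{\inj{j}}:\biguplus_j \Dsem{M_j}\to \Dsem{\biguplus_i M_i}$, obtained by functoriality of $\Dsemsymbol$ applied to the smooth inclusions $\inj{j}:M_j\hookrightarrow \biguplus_i M_i$. For surjectivity, given $(x,[\gamma])$ with $x\in M_j$, the clopenness of $M_j$ in $\biguplus_i M_i$ forces $\gamma$ to land in $M_j$ on some open interval $I\ni 0$; precomposing with a smooth reparametrisation $\RR\to I$ that is the identity near $0$ yields an equivalent curve lying wholly in $M_j$. For injectivity, if curves $\gamma_1,\gamma_2$ in $M_j$ satisfy $\gamma_1;\inj{j}\sim \gamma_2;\inj{j}$, then testing against the zero-extensions $\hat f:\biguplus_i M_i\to\RR$ of all smooth $f:M_j\to\RR$ (which are themselves smooth since $M_j$ is clopen) forces $\gamma_1\sim \gamma_2$ already inside $M_j$.

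For the product, I would define $\Phi \defeq \tTuple{\Dsem{\pi_1},\ldots,\Dsem{\pi_n}}:\Dsem{\prod_i M_i}\to \prod_i \Dsem{M_i}$ via functoriality on the projections, and its candidate inverse $\Psi((x_1,[\gamma_1]),\ldots,(x_n,[\gamma_n]))\defeq((x_1,\ldots,x_n),[\tTuple{\gamma_1,\ldots,\gamma_n}])$. Both compositions $\Phi;\Psi$ and $\Psi;\Phi$ are immediately identities by the universal property of products. The real content is the well-definedness of $\Psi$ on equivalence classes: supposing each $\gamma_i\sim\gamma_i'$, and picking any smooth $f:\prod_i M_i\to\RR$, I would work in a product chart around $(x_1,\ldots,x_n)\defeq(\gamma_1(0),\ldots,\gamma_n(0))$ and apply the multivariable chain rule to obtain $\nabla(\tTuple{\gamma_1,\ldots,\gamma_n};f)(0) = \sum_i \nabla(\gamma_i;f_i)(0)$, where $f_i(y)\defeq f(x_1,\ldots,y,\ldots,x_n)$ fixes the other coordinates at their basepoint values. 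Each hypothesis $\gamma_i\sim\gamma_i'$ applied to the smooth test function $f_i:M_i\to\RR$ then makes the two resulting sums coincide.

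The main obstacle is the well-definedness of $\Psi$ in the product case, which requires translating the single-variable definition of tangent equivalence into a multivariable chain-rule computation carried out in a product chart. Once the bijections are in place, diffeomorphy is automatic: the canonical atlas on $\Dsem{M}$ is determined chart-by-chart from the atlas of $M$, and the atlas of a product (respectively, coproduct) of manifolds is itself a product (respectively, disjoint union) of the component atlases, so the bijections pull one atlas back to the other.
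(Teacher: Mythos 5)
Your proof is correct. For this lemma the paper itself supplies no argument at all: it is stated as a standard fact about tangent bundles (and what the paper really uses later is just that $\Dsemsymbol$ preserves finite products and countable coproducts), so there is no official proof to compare against. Your explicit verification at the level of the paper's curve-quotient definition of $\Dsemsymbol_x M$ is a sound way to discharge it: the clopenness/reparametrisation argument correctly reduces a curve in $\biguplus_i M_i$ to an equivalent curve in a single summand, the zero-extension of test functions gives injectivity, and your well-definedness argument for $\Psi$ via the chain rule is clean -- in particular it cleverly uses only the globally defined slice functions $f_i$ as test functions, so no bump-function or partition-of-unity detour is needed. The only point you gloss is the final smoothness check in the product case: in the canonical charts induced by a product chart of $\prod_i M_i$, your bijection is the coordinate interleaving $(x_1,\ldots,x_n,v_1,\ldots,v_n)\mapsto(x_1,v_1,\ldots,x_n,v_n)$, a linear permutation, which is what makes "the bijections pull one atlas back to the other" literally true; spelling this out would complete the diffeomorphism claim. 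Since both isomorphisms are built by functoriality from the structural injections and projections, they are canonical (indeed natural), which is exactly the form in which the lemma is invoked for $\DtoT{\ty}$ and later for $\Dsemsymbol[k]$.
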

% \begin{proof}[notes] 
%\sss{Couldn't we omit this proof, and just put a general ref for manifolds at the beginning of the subsection?}\mv{I have spent about 30min looking through differential geometry books and
%  I cannot find a proof of the fact about coproducts anywhere. It's obviously true and any differential geometer will tell you so, but they don't seem to write it down.}For products, see e.g. \cite[ex. 3-2]{lee2013smooth}.
% For disjoint unions, notice that that smooth curves into a 
% disjoint union of manifolds always factor over a single inclusion, because $\RR$ is connected.
% \qed
% \end{proof}
We define a canonical isomorphism $\DtoT{\ty}:\sem{\Dsyn{\ty}}\to\Dsem{\sem{\ty}}$ for every type $\ty$, by induction on the structure of types. We let $\DtoT{\reals}:\sem{\Dsyn{\reals}}\to\Dsem{\sem{\reals}}$ be given by $\DtoT{\reals}(x,x')\defeq (x,[t\mapsto x+x't])$.
For the other types, we use Lemma~\ref{lemma:Dsem}. 
We can now phrase correctness at all first order types.
 \begin{theorem}[Semantic correctness of $\Dsynsymbol$ (full)]\label{thm:fwd-cor-full}
   For any ground $\ty$, any first order context $\Gamma$
   and any term $\Gamma\vdash\trm:\ty$,
   the syntactic translation $\Dsynsymbol$ coincides with the tangent bundle functor, modulo these canonical isomorphisms:
   \[\xymatrix{
       \sem{\Dsyn\Gamma}\ar[r]^{\sem{\Dsyn \trm}}\ar[d]_{\DtoT{\Gamma}}^\cong
       &\sem{\Dsyn \ty}\ar[d]^{\DtoT{\ty}}_\cong
       \\
       \Dsem{\sem\Gamma}\ar[r]_{\Dsem{\sem \trm}}
       &\Dsem {\sem \ty}
       }\]
 \end{theorem}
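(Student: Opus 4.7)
The plan is to lift the gluing argument from diagram~(\ref{dgm:gluing}) to the level of tangent bundles by exploiting the fact that at first order types a tangent vector is literally a germ of a smooth curve. I would proceed in four steps.

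First, I would verify that the family $\DtoT{\ty}$ is well-defined by induction on first order types. For $\reals$ this is the definition. For products I would use Lemma~\ref{lemma:Dsem} together with the observation that $\Dsyn{-}$ commutes with products, so $\DtoT{\tProd{\ty_1}{\dots}{\ty_n}} \defeq \DtoT{\ty_1}\times\dots\times \DtoT{\ty_n}$ is an iso. For variants and lists, I would use the other half of Lemma~\ref{lemma:Dsem} together with the observation that $\sem{\Dsyn{-}}$ also commutes with coproducts (this falls out of the explicit definitions of $\Dsyn{-}$ on injections, $\tNil$ and $\tCons{-}{-}$ and of $\sem{-}$ on variant and list types). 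The same inductive argument shows that $\DtoT{\Gamma}$ is well-defined for first order contexts.

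Second, I would reinterpret diagram~(\ref{dgm:gluing}) in the present setting. For any smooth curve $\gamma:\RR\to\sem{\Gamma}$, define its canonical tangent lift $\tilde\gamma:\RR\to \sem{\Dsyn\Gamma}$ so that, at each $t_0$, $\DtoT{\Gamma}(\tilde\gamma(t_0))=(\gamma(t_0),[s\mapsto \gamma(t_0+s)])$. By induction on first order types, I would show that $(\gamma,\tilde\gamma)\in S_{\Gamma}$ (as defined by the gluing semantics $\semgl{-}$), and conversely that every pair in $S_{\ty}$ at a first order type is of the form $(\gamma,\tilde\gamma)$ modulo $\DtoT{\ty}$. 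The base case $\reals$ is exactly the definition of $S_{\reals}$; the product case is componentwise; the coproduct/list cases use the sheaf-like plot condition, which forces any $U$-plot into $\biguplus_i \sem{\ty_i}$ to factor locally through a single summand, reducing the statement to the summand's inductive hypothesis.

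Third, from the commutation of diagram~(\ref{dgm:gluing}) and the characterisation just proved, I get: for every first order $\Gamma,\ty$, every $\Gamma\vdash t:\ty$, and every smooth $\gamma$, the pair $(\gamma;\sem{t},\tilde\gamma;\sem{\Dsyn t})$ lies in $S_{\ty}$, hence $\tilde\gamma;\sem{\Dsyn t}$ is (up to $\DtoT{\ty}$) the tangent lift of $\gamma;\sem{t}$. Finally, to obtain the pointwise equality of the theorem, I would evaluate at $t_0=0$: given $(x,v)\in\Dsem{\sem{\Gamma}}$, pick a representative curve $\gamma$ with $\gamma(0)=x$ and $[\gamma]=v$; then $\DtoT{\Gamma}^{-1}(x,v)=\tilde\gamma(0)$, and the step-three identity at $t_0=0$ reads precisely $\DtoT{\ty}\bigl(\sem{\Dsyn t}(\DtoT{\Gamma}^{-1}(x,v))\bigr)=\Dsem{\sem{t}}(x,v)$, which is the commutation of the square.

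The main obstacle is step two, specifically the coproduct/list cases: one must check that a plot $\RR\to \sem{\Variant{\dots}}$ or $\RR\to \sem{\List{\ty}}$ does land locally in a single summand so that the inductive hypothesis applies, and that the canonical iso $\DtoT{-}$ genuinely intertwines $\sem{\Dsyn{-}}$ on the constructors and destructors (injections, $\tCons{-}{-}$, variant pattern matching and $\mathbf{fold}$) with the functorial action of $\Dsemsymbol$ on the corresponding manifold constructions. Once this local decomposition is in place, everything else reduces to the already-established Theorem~\ref{thm:fwd-cor-basic} applied summand-by-summand. \qed
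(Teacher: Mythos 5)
Your proposal is correct and follows essentially the same route as the paper: your step 2 is exactly the paper's key observation that at first order types $S_{\ty}=\{(f,\tilde f)\mid \tilde f;\DtoT{\ty}=\bar f\}$ (with $\bar f$ the tangent lift of a curve), and your steps 3--4 inline the paper's other observation, the curve criterion for a map to be of the form $\Dsem{g}$, by picking a representative curve for each tangent vector and evaluating at $0$. The only difference is presentational, so nothing further is needed.
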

 \begin{proof}[notes]
   For any curve $\gamma\in\Man(\RR,M)$, let $\bar \gamma\in\Man(\RR,\Dsem M)$ be the tangent curve, given by $\bar \gamma(x)=(\gamma(x),[t\mapsto \gamma(x+t)])$.
   First, we note that a smooth map $h:\Dsem M\to \Dsem N$ is of the form $\Dsem{g}$ for some $g:M\to N$ if for all smooth curves $\gamma:\RR\to M$ we have $\bar \gamma;h=\overline{(\gamma;g)}:\RR\to \Dsem N$. This generalizes~(\ref{eqn:dualnumber}).
   Second, for any first order type $\ty$, $S_{\sem\ty}=\{(f,\tilde{f})~|~\tilde{f};\DtoT\tau=\bar f\}$. This is shown by induction on the structure of types. 
   We conclude the theorem from diagram (\ref{dgm:gluing}), by putting these two observations together.\qed

%    Starting from Th. \ref{thm:fwd-cor-basic}, one first generalizes from $\ty=\reals$ to an 
% arbitrary first order $\ty$, keeping $\Gamma=\var:\reals$.
% Using induction, one notes that smooth maps into products and their derivatives decompose as tuples, and, similarly, that smooth maps into coproducts and their derivatives decompose 
% as cotuples (because the domain $\RR$ is connected).
% Next, we generalize to the case of general first order $\Gamma$ by using Lem. \ref{lem:partial-derivatives}
% below.
%\qed
 \end{proof}
 % \begin{lemma}\label{lem:partial-derivatives}
 %  A smooth function $f: \Dsem{M}\to \Dsem{N}$ is of the form 
 %  $\Dsem{g}$ for some $g:M\to N$ iff for all smooth curves $\gamma:\RR\to M$, we have that
 %  $\Dsem{\gamma};f=\Dsem{\gamma;g}$.
 %  \end{lemma}
 %  \begin{proof}[notes] This is well-known. It follows immediately from the (geometric) definition 
 %    of $\Dsem{M}$ as equivalence classes $[\gamma]$ of curves $\gamma$ that we are using. \qed
 %  \end{proof}

%%% Local Variables:
%%% mode: latex
%%% TeX-master: "paper"
%%% End:

\section{A continuation-based AD algorithm}\label{sec:rev-mode-short}
\vspace{-6pt}
We now illustrate the flexibility of our framework by briefly describing an alternative syntactic translation $\Dsynrevsymbol[\rho]$. This alternative translation uses aspects of continuation passing style, inspired by recent developments in reverse mode AD~\cite{wang2018demystifying,brunel2019backpropagation}.
In brief, $\Dsynrevsymbol[\rho]$ works by
$\Dsynrev[\rho]\reals=(\reals*(\reals\To\rho))$. Thus instead of using a pair of a number and its tangent, we use a pair of a number and a continuation. The answer type $\rho=\reals^k$ needs to have the structure of a vector space, and the continuations that we consider will turn out to be linear maps. Because we work in continuation passing style, the chain rule is applied contravariantly.
If the reader is familiar with reverse-mode AD algorithms, they may think of the dimension $k$ as the number of memory cells used to store the result. 

% Rev AD is an algorithm for calculating derivatives in a different order 
% from that of fwd AD that tends to be more efficient for functions $\RR^n\to\RR^m$
% when $n\gg m$ while fwd AD is preferable otherwise.
% We describe briefly how the analysis also shows correctness of a naive version of rev AD.
% The intuition to keep in mind is that rev AD: 1) treats a derivative as a
% linear map corresponding to multiplying with the derivative 2) calculates derivatives contravariantly.
% When phrasing rev AD in a purely functional setting -- that is, without the use
% of mutable state -- it is natural to define it as a family of macros
% $\Dsynrevsymbol[k]$ for $k\in \NN$.
% As a rough analogy, we think of $\Dsynrevsymbol[k]$ as the functional equivalent of a 
% rev AD algorithm which has access to $k$ memory cells of type $\RR$.
% (Observe that a similar approach is taken by \cite{brunel2019backpropagation}.)
Computing the whole gradient of a term $\var_1:\reals,{.}{.}{.},\var_k:\reals\vdash \trm:\reals$ at once is then achieved
by running $\Dsynrev[k]{\trm}$ on a $k$-tuple of basis vectors for $\reals^k$.

We define the continuation-based AD macro
$\Dsynrevsymbol[k]$ on types and terms as the unique structure preserving functor
$\Syn\to\Syn$
with $
    \Dsynrev[k]{\reals} = \bProd{\reals}{(\reals\To\reals^k)}
$~and\\
$\Dsynrev[k]{\cnst} \defeq \tPair{\cnst}{\fun{\var[3]}{\tTriple{\underline{0}}{\ldots}{\underline{0}}}}$\\
$
\begin{array}{l}
    \Dsynrev[k]{\trm+\trm[2]} \defeq \pMatch{\Dsynrev[k]{\trm}}{\var}{\var'}
    {\pMatch{\Dsynrev[k]{\trm[2]}}{\var[2]}{\var[2]'}
    {\tPair{\var + \var[2]}{\fun{\var[3]}{\var'\, \var[3] + \var[2]'\, \var[3]}}}}\\
    \Dsynrev[k]{\trm*\trm[2]} \defeq  \pMatch{\Dsynrev[k]{\trm}}{\var}{\var'}
    {\pMatch{\Dsynrev[k]{\trm[2]}}{\var[2]}{\var[2]'}{}}\\
    \hspace{165pt}\phantom{\Dsynrev[k]{\trm*\trm[2]\defeq}}{{\tPair{\var * \var[2]}{\fun{\var[3]}{\var[1]'\,(\var[2] * \var[3]) + \var[2]'\,(\var * \var[3])}}}}\\ 
    \Dsynrev[k]{\sigmoid(\trm)} \defeq \pMatch{\Dsynrev[k]{\trm}}{\var}{\var'}{
\letin{\var[2]}{\sigmoid(\var)}{
\tPair{\var[2]}{\fun{\var[3]}{\var'\,(\var[2]*(1-\var[2])*\var[3])}}
}    
}.
\end{array}$\\
Here, we use sugar $\var:\reals^k,\var[2]:\reals^k\vdash \var+\var[2]\defeq 
\tMatch{\var}{\var_1,\ldots,\var_k}{}
\tMatch{\var[2]}{\var[2]_1,\ldots,\var[2]_k}{}
\tTriple{\var_1+\var[2]_1}{\ldots}{\var_k+\var[2]_k}$.
(We could easily expand this definition by making $\Dsynrevsymbol[k]$ preserve all
other term and type formers, as we did for $\Dsynsymbol$.)
Note that the corresponding scheme for an arbitrary $n$-ary operation $\op$ would be (c.f. the
scheme for forward AD in \S\ref{sec:extended-language})
\\
$
\begin{array}{ll}
	\Dsynrev[k]{\op(\trm_1,\ldots,\trm_n)}\defeq\hspace{-1pt} &\pMatch{\Dsynrev[k]{\trm_1}}{\var_1}{\var_1'}
                       { \ldots \to\pMatch{\Dsynrev[k]{\trm_n}}{\var_n}{\var_n'}
                       {\\ &\tPair{\op(\var_1,\ldots,\var_n)}{\fun{\var[3]}{\sum_{i=1}^n\var_i' (\partial_i\op(\var_1,\ldots,\var_n)*\var[3])}}}}.
                    \end{array}
                    $\\
The idea is that $\Dsynrev[k]{\trm}$ is a higher order function that simultaneously computes
$\trm$ (the forward pass) and defines as a continuation the reverse pass which computes the gradient.
In order to actually run the algorithm, we need two auxiliary definitions\\
$
\begin{array}{l}
  \lamRsymbol[k]_\reals\defeq \lambda{\var[3]}.\,
    \pMatch{\var[3]}{\var}{\var'}{}
    \tMatch{\var'}{\var'_1,\ldots,\var'_k}{}\\
    \phantom{  \lamRsymbol[\reals][k]\defeq }
    \tPair{\var}{\fun{\var[2]}{\tTriple{\var'_1*\var[2]}{\ldots}{\var'_k*\var[2]}}}:     \Dsyn[k]{\reals} \to \Dsynrev[k]{\reals}\\
\evRsymbol [k]_\reals\defeq \lambda {\var[3]}.\,\pMatch{\var[3]}{\var}{\var'}{\tPair{\var}{\var'\, \underline{1}}}: \Dsynrev[k]{\reals}\To\Dsyn[k]{\reals} .
\end{array}$\\
Here, $\Dsynsymbol[k]$ is a macro on types (and terms) with
exactly the same inductive definition as $\Dsynsymbol$ except for the base case $\Dsyn[k]{\reals}=\bProd{\reals}{\reals^k}$.
By noting that both $\Dsynsymbol[k]$ and $\Dsynrevsymbol[k]$ preserve all type formers,
we can extend these definitions to all first order types $\ty$: $
    \var[3]:\Dsyn[k]{\ty}\vdash\lamR[\ty][k]{\var[3]}: \Dsynrev[k]{\ty},$
    $\var[3]:\Dsynrev[k]{\ty}\vdash\evR[\ty][k]{\var[3]}: \Dsyn[k]{\ty}$.
We can think of $\lamR[\ty][k]{\var[3]}$ as encoding $k$ tangent vectors $\var[3]:\Dsyn[k]{\ty}$ as a closure, so it is suitable
for running $\Dsynrev[k]{\trm}$ on, and $\evR[\ty][k]{\var[3]}$ as actually evaluating the reverse pass defined by 
$\var[3]:\Dsynrev[k]{\ty}$ and returning the result as  $k$ tangent vectors.
The idea is that given some $\var:\ty\vdash\trm:\ty[2]$ between first order types $\ty,\ty[2]$,
we run our continuation-based AD by running $\evR[{\ty[2]}][k]{\subst{\Dsynrev[k]{\trm}}{\sfor{\var}{\lamR[\ty][k]{\var[3]}}}}$.

The correctness proof closely follows that for forward AD.
In particular, one defines a binary logical relation $\semgl{\reals}^{r,k}=(\RR,\RR\times (\RR^k)^\RR, S^{r,k}_{\reals})$, where
$S^{r,k}_{\reals}=
\left\{ \left(f,x\mapsto \sPair{f(x)}{y\mapsto \sTriple{\partial_1 f(x) * y}{\ldots}{\partial_k f(x) * y}}\right)\mid f\in\plots{\RR}^{\RR^k}\right\}
$, on the plots $\plots{\RR}^{\RR^k}\times\plots{\RR\times ((\RR^k)^\RR)}^{\RR^k}$
and verifies that $\sem{\cnst}\times \sem{\Dsynrev[k]{\cnst}}$, $\sem{\var+\var[2]}\times \sem{\Dsynrev[k]{\var +\var[2]}}$, $\sem{\var*\var[2]}\times \sem{\Dsynrev[k]{\var *\var[2]}}$ and $\sem{\sigmoid(\var)}\times \sem{\Dsynrev[k]{\sigmoid(\var)}}$
respect this logical relation.
It follows that this relation extends to a functor $\semgl{-}^{r,k}:\Syn\to\Gl[\RR^k]$ such that $\id\times \Dsynrevsymbol[k]$
factors over $\semgl{-}^{r,k}$, implying the correctness of the continuation-based AD by the following lemma.
\begin{lemma}
    For all first order types $\ty$ (i.e. types not involving function types),
    we have that $\sem{\evR[\ty][k]{\lamR[\ty][k]{\trm} }}=\sem{\trm}$.
\end{lemma}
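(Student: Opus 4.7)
I would prove this by induction on the structure of the first-order type $\ty$. The roles played by $\lamRsymbol[k]$ and $\evRsymbol[k]$ are purely to package and unpackage tangent information, and at every structural level the two operations are inverse to one another up to $\beta\eta$ and the arithmetic identity $z*\underline{1}=z$. The semantic equation we want is then a compositional consequence of this pointwise fact.

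\emph{Base case $\ty=\reals$.} Starting from an arbitrary $\trm:\Dsyn[k]{\reals}=\bProd{\reals}{\reals^k}$, I would unfold the definitions: $\lamR[\reals][k]{\trm}$ first pattern-matches $\trm$ as $\tPair{\var}{\tTriple{\var'_1}{\ldots}{\var'_k}}$ and builds the closure $\tPair{\var}{\fun{\var[2]}{\tTriple{\var'_1*\var[2]}{\ldots}{\var'_k*\var[2]}}}$; applying $\evR[\reals][k]$ then pattern-matches this and invokes the continuation at $\underline{1}$, producing $\tPair{\var}{\tTriple{\var'_1*\underline{1}}{\ldots}{\var'_k*\underline{1}}}$. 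In the diffeological semantics the real-multiplication by $1$ is the identity, so this equals $\sem\trm$.

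\emph{Inductive cases.} Since both $\Dsynsymbol[k]$ and $\Dsynrevsymbol[k]$ preserve finite products, variants, and lists, the canonical extensions of $\lamRsymbol[k]$ and $\evRsymbol[k]$ to composite first-order types are the evident structural congruences: e.g.\ $\lamR[\tProd{\ty_1}{\ldots}{\ty_n}][k]{\var[3]}$ pattern-matches $\var[3]$ into its $n$ components and applies $\lamR[\ty_i][k]$ componentwise, variants are handled by case analysis dispatching $\lamR[\ty_i][k]$ into each branch, and lists are handled by $\mathrm{map}\,(\fun{\var[3]}{\lamR[\ty][k]{\var[3]}})$; similarly for $\evRsymbol[k]$. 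The semantics of pairing, tupling, variant injection/case, and $\mathrm{map}$ are all strictly compositional on $\Diff$, so by the inductive hypothesis, $\sem{\evR[\ty_i][k]{\lamR[\ty_i][k]{-}}} = \mathrm{id}$ at each component propagates to the composite.

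\emph{Expected obstacle.} None of the cases is deep, but the list case is the one requiring a small auxiliary observation: one needs that $\mathrm{map}\,\mathrm{id}=\mathrm{id}$ in the diffeological interpretation of $\List{-}$, i.e.\ the functoriality of the list-object construction in $\Diff$. This is immediate from the set-theoretic semantics of $\sem{\List\ty}=\biguplus_i \sem\ty^i$ recalled in \S\ref{sec:extended-language-semantics}, but it is the one place where the extension to inductive types (as opposed to mere finite products) does real work. Once that is recorded, the induction goes through uniformly and the lemma follows.
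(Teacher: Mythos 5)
Your proposal is correct and matches the paper's own (sketched) argument: induction on the structure of $\ty$, with the base case observing that $\lamRsymbol[k]_{\reals}$ packages the $k$ tangents as a linear map which $\evRsymbol[k]_{\reals}$ undoes by evaluating at $\underline{1}$ (semantically, since $z*1=z$ holds in $\Diff$ though not in $\Syn$), and the composite cases following structurally because both macros preserve products, variants and lists. Your extra remark about semantic functoriality of the list construction is a fair elaboration of the same argument, not a different route.
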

\begin{proof}[notes] This follows by an induction on the structure of $\ty$. The idea is that $\lamRsymbol[k]_{\ty}$ embeds reals
    into function spaces as linear maps, which is undone by $\evRsymbol[k]_{\ty}$ by evaluating the linear maps at $\underline{1}$. \qed
\end{proof}

To phrase correctness, in this setting, however, we need a few definitions.
Keeping in mind the canonical projection $\Dsem{M}\to M$, we define $\Dsem[k]{M}$ as the $k$-fold categorical pullback (fibre product) $\Dsem{M}\times_M\ldots \times_M \Dsem{M}$.
To be explicit, $\Dsemsymbol[k]_x{M}$ consists of $k$-tuples of tangent vectors at the base point $x$.
Again, $\Dsemsymbol[k]$ extends to a functor $\Man\to \Man$ by defining $\Dsem[k]{f}(x,\sTriple{v_1}{\ldots}{v_k})\defeq \sPair{f(x)}{\sTriple{\Dsemsymbol_x(f)(v_1)}{\ldots}{\Dsemsymbol_x(f)(v_k)}}$.
As $\Dsemsymbol[k]$ preserves countable coproducts and finite products (like $\Dsemsymbol$), it follows that the isomorphisms $\DtoT{\ty}$ generalize to 
canonical isomorphisms $\DtoT[,k]{\ty}:\sem{\Dsyn[k]{\ty}}\to \Dsem[k]{\sem{\ty}}$ for first order types $\ty$.
This leads to the following correctness statement for continuation-based AD.
\begin{theorem}[Semantic correctness of ${\Dsynrevsymbol[k]}$]
    \label{thm:rev-cor-full}
    For any ground $\ty$, any first order context $\Gamma$
    and any term $\Gamma\vdash\trm:\ty$,
    syntactic translation $\trm\mapsto \evR[{\ty}][k]{\subst{\Dsynrev[k]{\trm}}{\sfor{\ldots}{\lamR[\Gamma][k]{\var[3]}}}}$ coincides with the tangent bundle functor, modulo these canonical isomorphisms:
    \[\xymatrix{
        \sem{\Dsyn[k]\Gamma} \ar[rrr]^{\sem{\lamRsymbol[k]_{\Gamma};\Dsynrev[k]{\trm};\evRsymbol[k]_{\ty}}}  \ar[d]_{\DtoT[,k]{\Gamma}}^\cong
        &&&\sem{\Dsyn[k] \ty}\ar[d]^{\DtoT[,k]{\ty}}_\cong
        \\
        \Dsem[k]{\sem\Gamma}\ar[rrr]_{\Dsem[k]{\sem \trm}}
        &&&\Dsem[k]{\sem \ty}
        }\]
\end{theorem}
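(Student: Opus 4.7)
The plan is to mirror the gluing argument that proves Theorem~\ref{thm:fwd-cor-full}, replacing the forward-mode relation on $\reals$ with the continuation-based relation $S^{r,k}_{\reals}$ defined just before the preceding lemma. First, I would use Lemma~\ref{lem:syn-initial} to extend the assignment $\reals\mapsto\semgl{\reals}^{r,k}=(\RR,\RR\times(\RR^k)^\RR,S^{r,k}_{\reals})$ to a bicartesian-closed, list-preserving functor $\semgl{-}^{r,k}:\Syn\to\Gl[\RR^k]$. The only nontrivial work here is to check that the interpretations of $\cnst$, $+$, $*$ and $\sigmoid$ paired with their $\Dsynrevsymbol[k]$-images define morphisms in $\Gl[\RR^k]$; this is the elementary chain/product rule calculation already indicated in the paragraph preceding the lemma. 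By initiality of $\Syn$, the square
\[
\xymatrix{
\Syn\ar[rr]^-{(\id,\Dsynrevsymbol[k])}\ar[d]_{\semgl{-}^{r,k}}&&\Syn\times\Syn\ar[d]^{\sem{-}\times\sem{-}}\\
\Gl[\RR^k]\ar[rr]_-{\projf}&&\Diff\times\Diff
}
\]
commutes, so every term $\Gamma\vdash\trm:\ty$ produces a pair $(\sem{\trm},\sem{\Dsynrev[k]{\trm}})$ that respects $S^{r,k}_\ty$.

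Second, I would give an inductive characterisation of $S^{r,k}_\ty$ at first-order types $\ty$ in terms of the $k$-fold tangent bundle $\Dsem[k]{\sem{\ty}}$. Concretely, I expect to show by induction on the structure of first-order types that $(f,\tilde{f})\in S^{r,k}_\ty$ if and only if $\sem{\evRsymbol[k]_\ty}\circ\tilde{f}$ corresponds, under the isomorphism $\DtoT[,k]{\ty}$, to the $k$-tuple of directional tangents of $f$ along the coordinate axes of $\RR^k$. The base case at $\reals$ is immediate from the definitions of $S^{r,k}_\reals$, $\lamRsymbol[k]_\reals$ and $\evRsymbol[k]_\reals$. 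The inductive step uses the fact that $\Dsemsymbol[k]$ preserves countable coproducts and finite products (the pullback definition of $\Dsem[k]{M}$ commutes with these, as in Lemma~\ref{lemma:Dsem}), together with the preceding lemma $\sem{\evRsymbol[k]_\ty\circ\lamRsymbol[k]_\ty}=\id$, which glues the recursive relations together.

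Third, applying this characterisation to $(\sem{\trm},\sem{\Dsynrev[k]{\trm}})$ from the first step produces exactly the commuting square claimed in the theorem: pre-composing with $\sem{\lamRsymbol[k]_\Gamma}$ re-packages forward tangent data into the closure form that $\sem{\Dsynrev[k]{\trm}}$ expects, and post-composing with $\sem{\evRsymbol[k]_\ty}$ unpacks the resulting closure to recover the tangents computed by $\Dsem[k]{\sem{\trm}}$, everything matching up under $\DtoT[,k]{-}$.

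I expect the main obstacle to be the inductive characterisation of $S^{r,k}_\ty$ at composite first-order types. In the forward-mode setting the tangent information is literally stored alongside the value, so the induction is transparent; in the continuation-based setting the $k$ tangents at a point are bundled into a single linear map $\reals\To\reals^k$ at the leaves, and at variants and lists these closures must be recombined so as to exactly reproduce the fibre-product structure of $\Dsem{M}\times_M\cdots\times_M\Dsem{M}$. Matching up the reverse-mode closures at coproducts with the fibrewise tangent bundles, while keeping the isomorphisms $\DtoT[,k]{\ty}$ strict enough to commute with $\lamRsymbol[k]$ and $\evRsymbol[k]$, is where the bookkeeping is most delicate. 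Once that is handled, the remainder is the same gluing-plus-initiality argument that proved Theorem~\ref{thm:fwd-cor-full}.
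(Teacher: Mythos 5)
Your plan is essentially the paper's own (sketched) argument for this theorem: build the logical-relations/gluing functor $\semgl{-}^{r,k}:\Syn\to\Gl[\RR^k]$ through which $(\id,\Dsynrevsymbol[k])$ factors by initiality, use the lemma that $\evRsymbol[k]_{\ty}$ undoes $\lamRsymbol[k]_{\ty}$, and describe $S^{r,k}_{\ty}$ at first order types via the isomorphisms $\DtoT[,k]{\ty}$ and $\Dsemsymbol[k]$, exactly as in the forward-mode proof of Theorem~\ref{thm:fwd-cor-full}. The only refinement needed is that your middle step should not be phrased as an ``iff'' through $\evRsymbol[k]_{\ty}$ (evaluating the continuation at $\underline{1}$ does not determine it away from $\underline{1}$, so that equivalence fails already at $\reals$); what your third step actually uses, and what suffices, are the two one-sided statements that membership in $S^{r,k}_{\ty}$ forces the $\evRsymbol[k]_{\ty}$-image to be the $k$-tangent data of the first component, and that the $\lamRsymbol[k]_{\Gamma}$-packaging of any plot's $k$-tangent data does lie in $S^{r,k}_{\Gamma}$.
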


For example, when $\tau=\reals$ and $\Gamma= \var,\var[2]:\reals$, we can run our continuation-based
AD to compute the gradient of a program $\var,\var[2]:\reals\vdash \trm:\reals$ at values $\var=V,\var[2]=W$ by evaluating
\[
\subst{\evRsymbol[2]_{\reals}\,(\subst{\Dsynrev[2]{\trm}}{\sfor{\var}{(\lamRsymbol[2]_{\var:\reals}\,v)},\sfor{\var[2]}{(\lamRsymbol[2]_{\var[2]:\reals}\,w)}})}
{\sfor{v}{\tPair{V}{\tPair{\underline{1}}{\underline{0}}}},\sfor{w}{\tPair{W}{\tPair{\underline{0}}{\underline{1}}}}}.\]
Indeed,
\begin{align*}
    &\sem{\subst{\evRsymbol[2]_{\reals}\,(\subst{\Dsynrev[2]{\trm}}{\sfor{\var}{(\lamRsymbol[2]_{\var:\reals}\,v)},\sfor{\var[2]}{(\lamRsymbol[2]_{\var[2]:\reals}\,w)}})}
    {\sfor{v}{\tPair{V}{\tPair{\underline{1}}{\underline{0}}}},\sfor{w}{\tPair{W}{\tPair{\underline{0}}{\underline{1}}}}}}=\\
    &\big(\sem{t}(\sem{V},\sem{W}),\partial_1\sem{t}(\sem{V},\sem{W}),\partial_2\sem{t}(\sem{V},\sem{W})\big).
\end{align*} 

% \begin{remark}[Cotangent bundles and rev AD]
% A reader familiar with differential geometry might wonder how rev AD relates to cotangent bundles $\Dsemrev{M}$.
% We define the $k$-fold cotangent space $\Dsemrevsymbol[k]_x (M)$ as the $k$-fold power of the set of linear maps $\Dsemsymbol_x (M)\multimap \RR$ with respect to the canonical vector space structure on $\Dsemsymbol_x(M)$.
% Any smooth map $f:M\to N$ induces a linear map $\Dsemrevsymbol[k]_x (f): \Dsemrevsymbol[k]_{f(x)}N\to \Dsemrevsymbol[k]_x(M)$ defined by $\sTriple{\xi_1}{\ldots}{\xi_k}\mapsto \sTriple{\Dsemsymbol_x(f);\xi_1}{\ldots}{{\Dsemsymbol_x(f);\xi_k}}$.
% The $k$-fold cotangent bundle $\Dsemrev[k]{M}$ is defined as the set $ \biguplus_{x\in M} \Dsemrevsymbol[k]_x (M)$ equipped
% with the natural atlas induced by the charts of $M$.
% Observe that, using induction on the structure of types, we can define, for all first order types $\ty$, canonical maps  $
% \TtoDrev[,k]{\ty}:  \Dsemrev[,k]{\sem{\ty}} \to \sem{\Dsynrev[,k]{\ty}}.
% $
% and 
% $
% \DtoTrev[,k]{\ty}: \sem{\Dsynrev[,k]{\ty}}\to \Dsemrev[,k]{\sem{\ty}}.
% $ 
% such that $\TtoDrev[,k]{\ty};\DtoTrev[,k]{\ty}=\id$.
% The idea is that $\TtoDrev[,k]{\ty}$ embeds the linear function space in the general function space.
% Then, for $\Gamma\vdash \trm:\ty$, it follows that
% $\TtoDrev[,k]{\Gamma};\sem{\Dsynrev[k]{\trm}};\DtoTrev[,k]{\ty[2]}(x,\xi)=\sPair{f(x)}{ \Dsemrevsymbol[k]_x(f)(\xi)}$.
% \end{remark}
%%% Local Variables:
%%% mode: latex
%%% TeX-master: "paper"
%%% End:

\section{Discussion and future work}\label{sec:conclusion}

\subsubsection{Summary.}
We have shown that diffeological spaces provide a denotational semantics for a higher order language with variants and inductive types (\S\ref{sec:semantics},\ref{sec:extended-language}). We have used this to show correctness of a simple AD translation (Thm.~\ref{thm:fwd-cor-basic}, Thm.~\ref{thm:fwd-cor-full}). But the method is not tied to this specific translation, as we illustrated in Section~\ref{sec:rev-mode-short}.

The structure of our elementary correctness argument for Theorem~\ref{thm:fwd-cor-basic} is a typical logical relations proof.
As explained in Section~\ref{sec:correctness}, this can equivalently be understood as a denotational semantics in a new kind of space obtained by categorical gluing.

Overall, then, there are two logical relations at play. One is in diffeological spaces, which ensures that all definable functions are smooth. The other is in the correctness proof (equivalently in the categorical gluing), which explicitly tracks the derivative of each function, and tracks the syntactic AD even at higher types.

\subsubsection{Connection to the state of the art in AD implementation.}
As is common in denotational semantics research, we have here focused on an idealized language and simple translations to illustrate the main aspects of the method. There are a number of points where our approach is simplistic compared to the advanced current practice, as we now explain.
\paragraph{Representation of vectors.}
\label{sub:discussion}
In our examples we have treated $n$-vectors as tuples of length~$n$. This style of programming does not scale to large~$n$. A better solution would be to use array types, following~\cite{shaikhha2019efficient}.
%It is well-understood how to perform AD for such types \cite{shaikhha2019efficient}.
Our categorical semantics and correctness proofs straightforwardly extend to cover them, in a similar way to our treatment of lists.
%-- it is merely the syntax and semantics of the API/terms that differs, but this is easily accommodated.
%It would be interesting to extend our logical relations proofs of correctness to this setting with arrays to get correctness guarantees that more directly apply to industrial practice.

\paragraph{Efficient forward-mode AD.}
For AD to be useful, it must be fast.
The syntactic translation $\Dsynsymbol$ that we use is the basis of an efficient AD library~\cite{shaikhha2019efficient}. However, numerous optimizations are needed, ranging from algebraic manipulations, to partial evaluations, to the use of an optimizing C compiler. A topic for future work would be to validate some of these manipulations using our semantics. The resulting implementation is performant in experiments~\cite{shaikhha2019efficient}. 

\paragraph{Efficient reverse-mode AD.}
Our sketch of continuation-based AD is primarily intended to emphasise that our denotational approach is not tied to any specific translation~$\Dsynsymbol$.
Nonetheless, it is worth noting that this algorithm shares similarities with advanced reverse-mode implementations: (1) it calculates 
derivatives in a (contravariant) ``reverse pass'' in which derivatives of operations are 
evaluated in the reverse order compared to their order in calculating the function 
value; (2) it can be used to calculate the full gradient of a function $\RR^n\to\RR$
in a single reverse pass (while $n$ passes of fwd AD would be necessary).
However, it lacks important optimizations and the continuation scales with the size of the input $n$ where it should scale with the size of the output. This adds an important overhead, as pointed out in \cite{pearlmutter2008reverse}. Speed being the main attraction of reverse-mode AD, its implementations tend to rely on mutable state, control operators and/or staging \cite{pearlmutter2008reverse,carpenter2015stan,wang2018demystifying,brunel2019backpropagation}, 
which we have not considered here.

\paragraph{Other language features.}
\label{sub:summary_and_future_work}
The idealized languages that we considered so far do not touch on several useful language constructs. For example: the use of functions that are partial (such as division) or partly-smooth (such as RelU); phenomena such as iteration, recursion; and probabilities.  
There are suggestions that the denotational approach using diffeological spaces can be adapted to these features using standard categorical methods. We leave this for future work.

\subsubsection{Acknowledgements.}
We have benefited from discussing this work with many people, including B.~Pearlmutter, O.~Kammar, C.~Mak, L.~Ong, G.~Plotkin, A.~Shaikhha, J.~Sigal, and others. 
Our work is supported by the Royal Society and by a Facebook Research Award. In the course of this work, MV has also been employed at Oxford (EPSRC Project EP/M023974/1) and at Columbia in the Stan development team.
This project has received funding from the European Union’s Horizon 2020 research and innovation
programme under the Marie Skłodowska-Curie grant agreement No. 895827.

\clearpage

%%% Local Variables:
%%% mode: latex
%%% TeX-master: "paper"
%%% End:

%
% ---- Bibliography ----
%
% BibTeX users should specify bibliography style 'splncs04'.
% References will then be sorted and formatted in the correct style.
%
\bibliographystyle{splncs04}
\bibliography{bibliography}

%%%%% To display Open Access text and logo, Please add below text and copy the 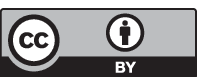 in the manuscript package %%%

\vfill

{\small\medskip\noindent{\bf Open Access} This chapter is licensed under the terms of the Creative Commons\break Attribution 4.0 International License (\url{http://creativecommons.org/licenses/by/4.0/}), which permits use, sharing, adaptation, distribution and reproduction in any medium or format, as long as you give appropriate credit to the original author(s) and the source, provide a link to the Creative Commons license and indicate if changes were made.}

{\small \spaceskip .28em plus .1em minus .1em The images or other third party material in this chapter are included in the chapter's Creative Commons license, unless indicated otherwise in a credit line to the material.~If material is not included in the chapter's Creative Commons license and your intended\break use is not permitted by statutory regulation or exceeds the permitted use, you will need to obtain permission directly from the copyright holder.}

\medskip\noindent\includegraphics{cc_by_4-0.eps}

\clearpage
\ifx\fossacsversion\undefined\appendix\section{$\CartSp$ and $\Man$ are not cartesian closed categories}
\label{sec:man_not_ccc}
\begin{lemma}\label{lem:borsuk}
There is no continuous injection $\RR^{d+1}\to \RR^d$.
\end{lemma}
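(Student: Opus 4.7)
The plan is to derive this as a direct corollary of the Borsuk--Ulam theorem, which states that every continuous map $S^n \to \RR^n$ identifies some pair of antipodal points. Suppose for contradiction that $g : \RR^{d+1} \to \RR^d$ is a continuous injection. I would then restrict $g$ to the unit sphere $S^d \subset \RR^{d+1}$, obtaining a continuous map $g|_{S^d} : S^d \to \RR^d$.

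By Borsuk--Ulam, there exists $x \in S^d$ such that $g(x) = g(-x)$. Since $x \in S^d$ means $x \neq 0$, we have $x \neq -x$, so $g$ identifies two distinct points of $\RR^{d+1}$. This contradicts the injectivity of $g$, completing the argument.

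The entire content of the proof is really just the invocation of Borsuk--Ulam; there is no subtle step to work through. The only thing one might wish to expand is that the restriction of a continuous map is continuous (trivial) and that $S^d$ is genuinely a subspace of $\RR^{d+1}$ on which Borsuk--Ulam applies (by definition). One could alternatively derive the statement from invariance of domain, but Borsuk--Ulam gives the cleanest one-line reduction and matches the hint already given in the paper body.
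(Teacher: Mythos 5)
Your proof is correct and is essentially identical to the paper's own argument: restrict the supposed injection to $S^d \subset \RR^{d+1}$ and apply Borsuk--Ulam to obtain antipodal points with equal images, contradicting injectivity. Nothing further is needed.
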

\begin{proof}
If there were, it would restrict to a continuous injection
$S^d\to \RR^d$.
The Borsuk-Ulam theorem, however, tells us that every continuous 
$f:S^d\to\RR^d$ has some $x\in S^d$ such that $f(x)=f(-x)$, which is a
contradiction. \qed
\end{proof}
Let us define the terms:
\[
\var_0:\reals,\ldots, \var_n:\reals\vdash \trm_n=
\fun{\var[2]}{\var_0+\var_1 * y + \dots + \var_n*y^n}:\reals\To\reals
\]
Assuming that $\CartSp$/$\Man$ is cartesian closed, observe that these get
interpreted as injective continuous (because smooth)
functions $\RR^n\to \sem{\reals\To\reals}$ in $\CartSp$ and $\Man$.
\begin{theorem}
$\CartSp$ is not cartesian closed.
\end{theorem}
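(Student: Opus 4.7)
The plan is to derive a contradiction from the assumption that $\CartSp$ is cartesian closed, by exploiting the already-stated Lemma~\ref{lem:borsuk} (the Borsuk--Ulam consequence that there is no continuous injection $\RR^{d+1} \to \RR^d$). First I would suppose for contradiction that $\CartSp$ is cartesian closed, so in particular the exponential object $\sem{\reals \To \reals}$ exists in $\CartSp$ and must therefore be some cartesian space $\RR^p$ for a fixed finite $p$.

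Next I would interpret the family of polynomial terms $\trm_n$ given just before the statement. By the universal property of the exponential, each $\trm_n$ induces a morphism $\sem{\trm_n} : \RR^{n+1} \to \sem{\reals \To \reals} = \RR^p$ in $\CartSp$, which is in particular smooth and hence continuous. The key observation I would then verify is that this map is \emph{injective}: two distinct coefficient tuples $(c_0,\dots,c_n)\neq (c_0',\dots,c_n')$ yield two genuinely different polynomial functions $\RR \to \RR$, and hence two different elements of $\sem{\reals \To \reals}$ under any sound interpretation that makes the evaluation map $\sem{\reals \To \reals}\times \RR \to \RR$ well-defined and correct on constants.

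Finally, choosing $n$ so that $n+1 > p$ (for instance $n = p$) produces a continuous injection $\RR^{p+1} \to \RR^p$, directly contradicting Lemma~\ref{lem:borsuk}. The same argument, mutatis mutandis, rules out cartesian closure of $\Man$, since $\Man$ also embeds fully into $\Diff$ and any manifold representing $\sem{\reals \To \reals}$ would locally look like some $\RR^p$, again defeated by Borsuk--Ulam applied on a chart of sufficiently high codimension.

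The main obstacle I expect is the injectivity step: one must justify carefully that the categorical interpretation of $\trm_n$ really does separate coefficient tuples as points of the supposed exponential object. This is not completely automatic from cartesian closure alone --- it uses that $\CartSp$ contains enough points (equivalently, that evaluation at real arguments is jointly faithful) to distinguish distinct polynomial functions. Concretely, I would argue that if $\sem{\trm_n}(c) = \sem{\trm_n}(c')$ in $\RR^p$, then composing with evaluation at each $y \in \RR$ yields equal polynomials $\sum c_i y^i = \sum c_i' y^i$ as functions $\RR \to \RR$, forcing $c = c'$ by the standard fact that a nonzero real polynomial has only finitely many roots. Once this is in hand, Lemma~\ref{lem:borsuk} closes the argument immediately.
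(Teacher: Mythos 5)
Your proposal is correct and follows essentially the same route as the paper: interpret the polynomial terms $\trm_n$ as continuous injections $\RR^{n+1}\to\sem{\reals\To\reals}=\RR^p$ and contradict the Borsuk--Ulam lemma by taking $n+1>p$. The only difference is that you spell out the injectivity step (via the evaluation map and well-pointedness) which the paper merely asserts, and that is a harmless elaboration.
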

\begin{proof}
In case $\CartSp$ were cartesian closed, we would have $\sem{\reals\To\reals}=\reals^n$
for some $n$.
Then, we would get, in particular a continuous 
injection $\sem{\trm_{n+1}}:\RR^{n+1}\to \RR^n$,
which contradicts Lemma \ref{lem:borsuk}. \qed
\end{proof}
\begin{theorem}
$\Man$ is not cartesian closed.
\end{theorem}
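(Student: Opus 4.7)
The plan is to mirror the proof that $\CartSp$ is not cartesian closed, using the same polynomial terms $\trm_n$. Suppose for contradiction that $\Man$ is cartesian closed, so that $M \defeq \sem{\reals\To\reals}$ exists as a manifold (in the paper's sense, with locally finite dimension). As in the $\CartSp$ case, each $\sem{\trm_n} : \RR^{n+1} \to M$ is a smooth (and hence continuous) injection: different coefficients produce different polynomials, hence different morphisms $\RR\to\RR$ in $\Man$, hence different points of the function object~$M$.

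Unlike for $\CartSp$, we cannot invoke a single global dimension for $M$, since the paper's manifolds may have unbounded dimension, bounded only chart by chart. We therefore work locally around the point $x_0 \defeq \sem{\trm_n}(0,\ldots,0)$, which is the constant-zero function $y\mapsto 0$ and does not depend on $n$. Pick any chart $\phi_U : U \to \RR^{k}$ of $M$ containing $x_0$; then $k = n(U)$ is a fixed finite natural number independent of $n$.

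By continuity of $\sem{\trm_n}$, the preimage $\sem{\trm_n}^{-1}(U)$ is an open neighbourhood of the origin in $\RR^{n+1}$. Choosing an open ball $B \subseteq \sem{\trm_n}^{-1}(U)$ around the origin and composing with $\phi_U$ yields a continuous injection $B \to \RR^k$; since $B$ is homeomorphic to $\RR^{n+1}$, we obtain a continuous injection $\RR^{n+1} \to \RR^k$, contradicting Lemma~\ref{lem:borsuk} as soon as $n \geq k$. The only subtle point is the use of locally finite dimension to pin down a finite $k$ at the specific basepoint $x_0$; apart from that, the argument is a routine adaptation of the $\CartSp$ proof.
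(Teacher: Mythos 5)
Your proof is correct, but it takes a genuinely different route from the paper's. The paper first shows that all the images $A_n$ of the maps $\sem{\trm_n}$ lie in a \emph{single connected component} of $\sem{\reals\To\reals}$: it uses the inclusions $\iota_n$ to see that the $A_n$ form an increasing chain of connected sets, so their union is connected and hence sits inside one component, which (being a manifold of a single finite dimension $d$) then yields a continuous injection $\RR^{d+1}\to\RR^d$ from a small ball in the domain of $\sem{\trm_{d+1}}$, contradicting Lemma~\ref{lem:borsuk}. You instead anchor the argument at the single common point $x_0$ (the zero polynomial), fix one chart of finite dimension $k$ there, and for $n\geq k$ restrict $\sem{\trm_n}$ to a small ball around the origin to obtain a continuous injection $\RR^{n+1}\to\RR^k$ (for $n>k$ one further restricts to a copy of $\RR^{k+1}$ before invoking Lemma~\ref{lem:borsuk}, a trivial extra step worth making explicit). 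Your version is arguably more robust with respect to the paper's liberal notion of manifold: it needs only the existence of \emph{one} finite-dimensional chart at one point, and entirely avoids the connectedness argument together with the implicit fact that dimension is constant on connected components (which itself rests on invariance of domain for chart overlaps). The paper's version, on the other hand, stays closer in shape to its $\CartSp$ proof and gives the slightly stronger structural observation that all the polynomial families land in one finite-dimensional component. Both proofs rely, at the same level of rigour, on the observation that the $\sem{\trm_n}$ are injective continuous maps into the putative exponential, which you justify (via distinct coefficients giving distinct morphisms $\RR\to\RR$ and hence distinct points of the function object) at least as explicitly as the paper does.
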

\begin{proof}
Observe that we have $\iota_n:\RR^n\to \RR^{n+1}$; $\tTriple{a_0}{\ldots}{a_n}\mapsto 
\tTuple{a_0,\ldots,a_n,0}$
and that $\iota_n;\sem{\trm_{n+1}}=\sem{\trm_n}$.
Let us write $A_n$ for the image of $\sem{\trm_n}$ and $A=\cup_{n\in\NN}A_n$.
Then, $A_n$ is connected because it is the continuous image of a connected set.
Similarly, $A$ is connected because it is the non-disjoint union of connected sets.
This means that $A$ lies in a single connected component of $\sem{\reals\To\reals}$,
which is a manifold with some finite dimension, say $d$.

Take some $x\in\RR^{d+1}$ (say, $0$), take some open $d$-ball $U$ around $\sem{\trm_{d+1}}(x)$,
and take some open $d+1$-ball $V$ around $x$ in $\sem{\trm_{d+1}}^{-1}(U)$.
Then, $\sem{\trm_{d+1}}$ restricts to a continuous injection from $V$ to $U$, or equivalently,
$\RR^{d+1}$ to $\RR^d$, which contradicts Lemma \ref{lem:borsuk}. \qed
\end{proof}\fi
\end{document}